\documentclass[conference]{IEEEtran}

\pdfminorversion=4
\usepackage[square,sort,compress,comma,numbers]{natbib}

\usepackage{xspace}
\usepackage{filecontents}

\usepackage{float}
\usepackage{pgf, tikz}

%
\usepackage{graphicx}

\usepackage[cmex10]{amsmath}

\usepackage{bm}
\usepackage{soul,color} %
\usepackage{amssymb} 

\usepackage{array}

\usepackage{subfigure}
\usepackage{caption2} 

\usepackage{amsfonts}
\usepackage{amsthm}   %

\usepackage{enumerate} %

\usepackage{extarrows}

\usepackage{tkz-fct}

\newtheorem{assumption}{Assumption} 
\newtheorem{theorem}{Theorem} 
\newtheorem{lemma}{Lemma}
\newtheorem{proposition}{Proposition}
\newtheorem{corollary}{Corollary}

\newtheorem{remark}{Remark}

\newtheorem{definition}{Definition}

\usetikzlibrary{patterns,snakes}

\usepackage{enumerate}

\usepackage{extarrows}
\usepackage{balance}

\newcommand{\myexpect}[2]{\mathsf{E}_{#1}\left[ #2 \right]}

\newcommand{\myprobability}[1]{\mathrm{Pr}\!\left[ #1 \right]}

\newcommand{\snr}{\mathsf{SNR}}


\newcommand{\tr}[1]{\left(#1\right)}
\newcommand{\myA}{\mathcal{A}}

\newcommand{\mynorm}[1]{\left\lvert #1 \right\rvert^2}

\newcommand{\myQQ}{\mathcal{Q}}

\newcommand{\patt}{\text{Patt}}

\begin{document}
\title{\Huge On the Latency, Rate and Reliability Tradeoff in Wireless Networked Control Systems for IIoT}
\author{\IEEEauthorblockN{Wanchun Liu, \emph{Member, IEEE}, Girish Nair,  \emph{Fellow, IEEE}, Yonghui Li, \emph{Fellow, IEEE},  Dragan Nesic, \emph{Fellow, IEEE},\\  Branka Vucetic, \emph{Fellow, IEEE},  and H. Vincent Poor, \emph{Fellow, IEEE}}
\thanks{\setlength{\baselineskip}{13pt} \noindent W. Liu, Y. Li and B. Vucetic are with School of Electrical and Information Engineering, the University of Sydney, Australia.
(emails: \{wanchun.liu,\ yonghui.li,\ branka.vucetic\}@sydney.edu.au).
G. Nair and D. Nesic are with  Department of Electrical and Electronic Engineering, The University of Melbourne, Australia.
(emails: \{gnair,\ dnesic\}@unimelb.edu.au).
}
}

\maketitle
\begin{abstract}
	\let\thefootnote\relax\footnote{
		The work of
		W. Liu was supported by the Faculty of Engineering Early Career
		Researcher Development Scheme 2020, University of Sydney. The work
		of Y. Li was supported by ARC under Grant DP190101988. The
		work of B. Vucetic was supported by the Australian Research Council's
		Australian Laureate Fellowships Scheme under Project FL160100032.
W. Liu, Y. Li and B. Vucetic are with School of Electrical and Information Engineering, the University of Sydney, Australia.
(emails: \{wanchun.liu,\ yonghui.li,\ branka.vucetic\}@sydney.edu.au).
G. Nair and D. Nesic are with  Department of Electrical and Electronic Engineering, The University of Melbourne, Australia.
(emails: \{gnair,\ dnesic\}@unimelb.edu.au).
H. Vincent Poor is with Department of Electrical Engineering, Princeton University. (email: poor@princeton.edu).
}
Wireless networked control systems (WNCSs) provide a key enabling technique for Industry Internet of Things (IIoT). 
However, in the literature of WNCSs, most of the research focuses on the control perspective, and has considered oversimplified models of wireless communications which do not capture the key parameters of a practical wireless communication system, such as latency, data rate and reliability.
In this paper, we focus on a WNCS, where a controller transmits quantized and encoded control codewords to a remote actuator through a wireless channel, and adopt a detailed model of the wireless communication system, which jointly considers the inter-related communication parameters.
{We derive the stability region of the WNCS. If and only if the tuple of the communication parameters lies in the region, the average cost function, i.e., a performance metric of the WNCS, is bounded.}
We further obtain a necessary and sufficient condition under which the stability region is $n$-bounded, where $n$ is the control codeword blocklength.
We also analyze the average cost function of the WNCS.  Such analysis is non-trivial because the finite-bit control-signal quantizer introduces a non-linear and discontinuous quantization function which makes the performance analysis very difficult.
We derive tight upper and lower bounds on the average cost function in terms of latency, data rate and reliability.   
Our analytical results provide important insights into the design of the optimal parameters to minimize the average cost within the stability region.
\end{abstract}

\begin{IEEEkeywords}
Industrial Internet of Things, sensor-actuator network, wireless networked control, mission-critical communications, performance analysis.
\end{IEEEkeywords}

\section{Introduction}
\noindent
Wireless networked control systems (WNCSs) have many applications in industrial and building automation, intelligent transportation systems and smart grid. This technology is driven by 
recent advances in wireless communications, networking, sensing, computing,
and control, as well as Industrial Internet of Things (IoT) applications~\cite{ParkSurvey,Turkey,Schulz,Splitting1,Splitting2,Jiang1,Jiang2,Jiang3}.
{In general, a WNCS is a spatially distributed control system consisting of a dynamic plant, a set of sensors that measure and report the plant state, a remote controller that collects the sensors' measurement and generates control signals, and a set of actuators that control the plant based on the received control signals.}
In contrast to the conventional NCSs using wired communications that support high-rate, real-time and reliable data transmission,
a WNCS can only transmit sensors' measurement and the control signals through unreliable wireless channels with relatively low signal-to-noise ratios (SNR), and thus have limited communication performance in terms of latency, data rate and reliability, which then impose constraints on the performance of the WNCS.

In the literature on WNCSs, various models of wireless communication
systems have been investigated.
A data-rate theorem that states the minimum data rate needed to stabilize a linear plant assuming a noiseless communication system with a limited data rate was derived in~\cite{Nair2004}.
The stability condition in terms of the SNR was investigated in~\cite{LiuGC,LiuJIoT} assuming coding-free (analog) communications.
The stability conditions in terms of the packet dropout probability have been derived in~\cite{TCP} assuming an infinite data rate (i.e., zero quantization error) and independent and identically distributed (i.i.d.) packet dropouts.
Based on this assumption, an optimal retransmission scheduling problem and an optimal sensing-control transmission scheduling problem of WNCSs were investigated in~\cite{KangTWC} and~\cite{KangJIoT}. In addition, multi-WNCS scheduling problems were investigated in~\cite{AlexLeong,AlexLeong2,Burak,Eisen,Eisen2}.
In~\cite{Nair09}, a linear scalar WNCS was considered, taking into account \emph{independent} communication parameters.

All the models of wireless communication systems considered in \cite{Nair2004,LiuGC,LiuJIoT,TCP,KangTWC,KangICC,KangJIoT} are valid under restrictive (and often unrealistic) assumptions and do not simultaneously consider the effect of latency, data rate and reliability, i.e., the three performance parameters of a high fidelity model for practical wireless communication systems.
%
In \cite{Nair09}, the effect of the interrelation between the communication parameters on the WNCS was ignored.
However, it is well-known that in a communication system, these parameters are inter-related, i.e., one cannot freely change one parameter without affecting others~\cite{tradeoffs}.
This naturally leads to {joint parameter design problems} in WNCSs, which have not been considered in the open literature.
Furthermore, the control performance in terms of average cost function of a WNCS with quantization and channel-coding methods remains largely unknown. 
Thus, the optimal design of the communication parameters to minimize the average cost of a control system in addition to stabilizing it, is still an open problem even for a scalar system.

In this paper, we focus on a scalar WNCS~\cite{Nair09,LiuGC,LiuJIoT}, where the controller sends control signals through an additive white Gaussian noise (AWGN) channel to the actuator which controls the plant system~\cite{TCP,KangTWC,KangICC,KangJIoT}. 
Our target is to obtain the stability region of the plant in terms of the inter-related communication parameters and analyze the average cost function of the plant within the stability region.

{The main contributions are summarized as follows:
\begin{itemize}
\item We investigate a WNCS with an ideal control and quantization method, where the controller and the actuator can utilize all the historical control and communication state information for control-signal quantization and de-quantization.
We derive the \emph{stability region} of the plant in terms of the channel-coding blocklength $n$ and the data rate $R$ based on the data-rate theorem~\cite{Nair09} and the finite blocklength information theory~\cite{Polyanskiy}.
The average cost function of the WNCS is bounded if and only if the tuple of communication parameters lies within the stability region.
We further derive the necessary and sufficient condition under which the stability region is $n$-bounded.
\emph{The result shows an important counter-intuitive finding: the plant can be stabilized with an arbitrarily large blocklength (i.e., a large latency) as long as the SNR of the wireless channel is greater than a certain value determined by the control-system parameters.}

\item We investigate a WNCS with a practical control and quantization method where the controller and the actuator can only utilize the current control and communication state information for quantization and de-quantization. Specifically, we adopt a zooming quantizer that can adaptively change its quantization range.
We derive the necessary and sufficient condition of the existence of a zooming quantizer in terms of the channel-coding blocklength $n$ and the data rate $R$. The pre-quantized control signal is always within the quantization range if and only if the condition is satisfied.

\item For the practical WNCS, we derived the stability region and obtain the necessary and sufficient condition under which the stability region is $n$-bounded.
We also analyze the average cost function of the WNCS, which is non-trivial. 
This is because the finite-bit control-signal quantizer introduces a non-linear and discontinuous quantization function which makes the performance analysis very difficult.
We derive closed-form upper and lower bounds of the average cost function in terms of the communication parameters. Also, we show that the gap between the bounds is negligible when $n$ is large, and the upper bound is tight even when $n$ is small.  
Our performance analysis helps to explain the above mentioned counter-intuitive finding: \emph{even though a plant can be stabilized with an arbitrarily large $n$, the average cost of the system is arbitrarily large and is not desirable.}

\end{itemize}}

The remainder of the paper is organized as follows: 
Sec.~II describes the WNCS, where the controller sends quantized and coded control signal to the actuator to control the dynamic plant.
Sec. III establishes the fundamental stability condition of the WNCS in terms of the communication parameters under the assumption that the quantizer can utilize all the historical control and communication state information and ideally require infinite memory. 
In Sec. IV-V, we investigate a WNCS with practical control and quantization methods with limited memory, and derive the stability condition and analyze the average cost function of the WNCS.
Finally, Sec. VI concludes this work.

Notation: $\mathbb{Z}_0$ and $\mathbb{Z}_+$ denote the set of non-negative and positive integers, respectively. $\mathbb{R}$ and $\mathbb{C}$ are the sets of real and complex numbers, respectively.
$X_t$ denotes a random process, and $x_t$ denotes its realization.
$\mathsf{Pr}[\cdot]$ and $\mathsf{E}[\cdot]$ denote probability and expectation operators.
$\limsup$ is the limit superior operator.

\begin{figure}[t]
	\centering
	\begin{tikzpicture}[scale=0.83]
	\draw [rounded corners=5pt] (-2.5,4.5) rectangle (1.5,2.5);
	\draw [rounded corners=5pt] (-3.5,1) node (v1) {} rectangle (2.5,0);
	\node [align=center] at (-0.5,3.5) {Plant\\$x_{t+1} \!=\!a x_{t} \!+ b u_{t} \!+ w_{t}$};
	\node at (-0.5,0.5) {Controller};
	\draw  (-4.5,4) rectangle (-2.5,3);
	\draw  (1.5,4) rectangle (3.25,3);
	\node at (-3.5,3.5) {Actuator};
	\node at (2.35,3.5) {Sensor};
	
	\node at (-4.15,-0.05) {$\tilde{\mu}_t$};
	\node at (3.65,3.75) {$x_t$};
	\draw [thick,-latex,rounded corners=5pt] (3.25,3.5) -- (4.5,3.5) -- (4.5,0.5) -- (2.5,0.5);
	\node at (-3.5,4.25) {$u_t$};
	\draw [-latex,thick,double distance=2pt,rounded corners=5pt](-3.5,0.3) -- (-6.2,0.3) -- (-6.2,3.7) -- (-4.5,3.7);
	
	\draw [] plot[smooth, tension=.7] coordinates {(4,1.8) (4.15,1.9) (4.35,1.7) (4.5,1.85)};

	\draw [] plot[smooth, tension=.7] coordinates {(-6.15,2.15) (-5.85,2.35) (-5.6,2.15) (-5.3,2.2)};
	\node [right] at (-5.3,2.15) {Backward Channel};

	\draw [] plot[smooth, tension=.7] coordinates {(-5.75,1.35) (-5.5,1.55) (-5.25,1.35) (-4.95,1.4)};
	\node [right] at (-4.95,1.35) {Feedback Channel};

	\node at (2.5,1.8) {Forward Channel};
	\draw [dashed,thick,-latex,rounded corners=5pt](-4.5,3.3) -- (-5.8,3.3) -- (-5.8,0.65) -- (-3.5,0.65);
	\node at (-4.15,0.85) {$s_t$};
	\end{tikzpicture}
	\caption{An illustration of the WNCS, where $x_t$, $\tilde{\mu}_t$, $u_t$, $s_t$ and $w_t$ are the plant state, the control signal, the control action, the feedback signal and the disturbance at time $t$, respectively. $a$ and $b$ are the parameters of the plant system.}
	\label{fig:sys}
\end{figure}
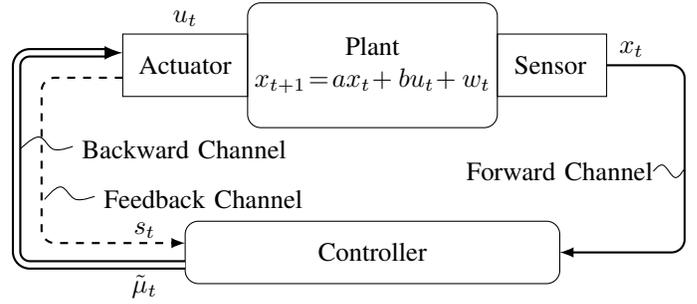

\section{System Model}
\noindent
We focus on a linear time-invariant (LTI) discrete-time scalar WNCS~\cite{Nair09,LiuGC,LiuJIoT} consisting of a dynamic unstable plant, an actuator, a wireless sensor, a remote controller, a \emph{forward} (i.e., sensor-to-controller) and a \emph{backward} (i.e., controller-to-actuator) digital channels, as shown in Fig.~\ref{fig:sys}. 
In general, based on the plant state information sent by the sensor through the forward channel, the controller generates an analog control signal, which is then quantized and coded by the quantizer and the channel-coding module, respectively. Then, the coded control signal (i.e., a sequence of control-information-carrying symbols) is sent to the actuator through the backward channel, and the actuator controls the plant and sends a feedback signal to the controller through the feedback channel.

We assume that the sensor and the controller are collocated, while the actuator and the controller are dislocated, and thus only focus on the \emph{backward} communication channel.
In other words, it is assumed that the state information of the plant is perfectly known by the controller without any delay~\cite{DanielBernoulli,DanielTradeoff,LiuGC,LiuJIoT,KangWCL}. 
However, in order to send the control-information bearing signal to the actuator through a band-limited wireless channel with the constraints of latency, data rate and reliability, the controller needs to perform adaptive quantization (i.e., source coding) and channel coding.
For low-mobility industrial control applications,
the communication channel is static, and thus the backward channel can be modeled as an AWGN channel~\cite{TCP,KangTWC,KangICC,KangJIoT}.\footnote{The fading channel scenario will be investigated in our future work.}
For simplicity, we assume that the discrete time step (i.e., the sampling period) of the dynamic system is equal to the symbol duration of the backward information transmission~\cite{Nair09}.

The dynamic plant system, communications through AWGN backward channel, quantization and control policy are described as follows.

\subsection{Dynamic Plant System}
The sensor periodically measures and sends the state of the dynamic system (i.e., the plant) to the controller, which utilizes such information to generate and send control signals to the actuator to stabilize the plant.
We consider a scalar discrete-time plant model described by~\cite{Nair09,LiuGC,LiuJIoT}
\begin{equation}\label{system}
x_{t+1} = a x_t + b u_{t} + w_t,\ \forall t\in \mathbb{Z}_0,
\end{equation}
where $x_t\in \mathbb{R}$ is the plant state at $t$, $u_{t}\in \mathbb{R}$ is the control action imposed by the actuator. 
We assume that the initial state, $x_0$, has {bounded support}, i.e., $x_0\in \left(-x_{0,\max},x_{0,\max}\right)$.
The i.i.d. process $w_t$ with zero mean and variance $\sigma^2_W$ is the plant disturbance, and $w_t \in \left(-w_{\max},w_{\max}\right)$.
The real numbers $a$ and $b$ are the parameters of the system. Specifically, we assume that $\vert a \vert >1$ implies that the plant is open-loop unstable, i.e., the plant state grows up unbounded exponentially fast without the control input $u_t$.

\subsection{Communications Through AWGN Backward Channel} \label{coding}
Consider a communication system with a fixed transmission rate and finite number of alphabet inputs. 
From time $t=0$, the transmitter (i.e., the controller) starts sending channel codewords every $n$ slots to the receiver (i.e., the actuator), where $n$ is the blocklength of a codeword. 
We denote a codebook with $2^{nR}$ codewords by $\mathcal{C} \triangleq \left\lbrace c_1,c_2,...,c_{2^{nR}} \right\rbrace$, where $R$ is the data rate (which is also known as the coding rate) and $c_i \in \mathbb{C}^n$, $\forall i$. The codebook is known by both the transmitter and the receiver.
The average decoding-error probability of this channel coding scheme through an AWGN channel is denoted as $\epsilon$.
Thus, the codebook, $\mathcal{C}$, can be denoted as an $(n,R,\epsilon)$ code from an information-theoretic perspective~\cite{Polyanskiy}.

Let ${\tilde{\mathcal{C}}}(\cdot)$ denote the channel coding function that maps the index into the set $\mathcal{C}$.
Let $\mu^{t+n}_{t+1} \in \mathcal{C}$ denote the output codeword of the channel encoder ${\tilde{\mathcal{C}}}(\cdot)$ that is transmitted in $n$ time slots from $t+1$, where the notation $z_n^k$ represents the sequence of $z_i$ from $n$ to $k$.

We make the following assumption:
\begin{assumption}
	\normalfont
The receiver knows whether the previous and current codewords have been correctly decoded or not, and the receiver discards the codeword once a decoding error occurs~\cite{TCP}.
\end{assumption}
Note that the error detection can be realized by adding a few redundant symbols for cyclic redundancy check (CRC). As we only focus on the information-carrying symbols in this work, the length of the CRC symbols is ignored for tractability\footnote{Although the CRC can never be 100\% correct in a wireless environment with unbounded noise, in the high SNR scenario the detection success rate can be made arbitrarily high and can be ignored.}.
The receiver sends a one-bit feedback to the transmitter through a perfect feedback channel, which indicates the correct/incorrect decoding.
Let $s_t$ indicate the correct/incorrect decoding of a codeword at time slot $t$, as illustrated in Fig.~\ref{fig:sys}. If a codeword is correctly decoded at $t$, $s_t = 1$; otherwise, $s_t = 0$.
Note that such feedback information is only utilized for quantization and control-signal generation rather than channel coding. As proved in \cite{Polyfeedback}, in the finite blocklength scenario, the achievable data rate with a feedback can only improve at most $2-3$ bits per block when $n>10$ compared with the non-feedback case.

{
Following the finite-blocklength information theory~\cite{Polyanskiy},
the channel capacity $C$ and the channel dispersion $\nu$ are defined as
\begin{align}
\label{C}
C &= \log_2\left(1+\snr\right),\\
\label{v}
\nu &=\snr \frac{2+\snr}{\left(1+\snr\right)^2} \left(\log_2 e\right)^2,
\end{align}
where $\snr$ is the SNR at the actuator and $e\approx 2.718$ is the Euler's number.}
Given the reliability (i.e., $\epsilon$) and latency (i.e., $n$) requirement, 
the maximum achievable data rate is approximated for $n \geq 100$ as~\cite{Polyanskiy}
\begin{equation}\label{Yury_R}
R \approx C- \sqrt{\frac{\nu}{n}}Q^{-1}\left(\epsilon\right),
\end{equation} 
where $Q(x) = \frac{1}{2 \pi} \int_{x}^{\infty} \exp\left(-\frac{u^2}{2}\right) \mathrm{d}u$ and $Q^{-1}(\cdot)$ is the inverse function of $Q(\cdot)$.
Thus, \eqref{Yury_R} demonstrates a design tradeoff between the latency, the reliability and the data rate (i.e., $R$);
given the data rate and latency requirement, the minimum achievable decoding-error probability is approximated as
\begin{equation} \label{Yury_e}
\epsilon \approx Q\left(\sqrt{\frac{n}{\nu}}\left(C-R\right)\right).
\end{equation}

\subsection{Quantization and Control} \label{sec:quantization and control}
We consider a general control-information quantizer
$\mathbb{\tilde{Q}}_t\left(x_0^t, \mu_1^{t},s_1^{t}\right) \in \{1,2,\cdots,2^{nR}\}, \forall t= k n, k \in \mathbb{Z}_0$, which generates and quantizes the control signal based on all the previous and current plant states, control codewords, and detection feedback~\cite{Nair2004,Nair09}.
Then, the index of the quantized control signal is passed to the channel encoder, and the output of the channel encoder from time $t+1$ to $t+n$ is represented as
\begin{equation} \label{coder}
\begin{aligned}
\mu^{t+n}_{t+1} &= 
\mathcal{E}_t\left(x_0^t, \mu_1^{t},s_1^{t}\right)\\
&\triangleq
{\mathcal{\tilde{C}}}\left(\mathbb{\tilde{Q}}_t\left(x_0^t, \mu_1^{t},s_1^{t}\right)\right)
, \forall t= k n, k \in \mathbb{Z}_0
\end{aligned}
\end{equation}
where $\mathcal{E}_t(\cdot)$ can be treated as the encoder at the controller for quantization and channel coding.

The decoder at the actuator is given as
\begin{equation} \label{decoder}
{u}_t = \mathcal{D}_t\left({\mu}_1^t \odot s_1^{t}\right), \forall t= k n, k \in \mathbb{Z}_+
\end{equation}
where $u_t \in \mathbb{R}$ is the actuator's decoded and de-quantized control signal (i.e., the control action), ${\mu}_1^t \odot s_1^{t}$ only keeps the correctly detected parts in ${\mu}_1^t$,
and the decoding function $\mathcal{D}_t(\cdot)$ depends on all the previous and current control codewords that have been detected correctly.

\begin{remark}
For the WNCS in \eqref{system} with the channel coding scheme described in Sec.~\ref{coding}, and the quantization and control policy in \eqref{coder} and \eqref{decoder}, the data rate $R$ and codeword blocklength $n$ jointly determine the quantization resolution of the control signal, the blocklength $n$ determines the control latency (and frequency), and the decoding-error probability $\epsilon$ determines the transmission reliability of the control signal. Therefore, the control performance of the WNCS depends on $(n,R,\epsilon)$.
\end{remark}

Note that the general encoder and decoder in \eqref{coder} and \eqref{decoder} utilizing all the historical information ideally require infinite memory for control-information encoding and decoding.
We will consider the stability condition of the WNCS without encoder/decoder memory constraint, and investigate a practical WNCS with memory constraint in the sequel.

\section{Stability Region of the WNCS without Memory Constraint}\label{sec:conter}
\noindent
In this section, we focus on the necessary and sufficient conditions for stabilizing the plant system \eqref{system} in the mean-square sense, which is defined as~\cite{Nair2004,Nair09,KangGC,KangJIoT,LiuGC,LiuJIoT,KangWCL} 
\begin{equation}\label{first_condition}
\limsup_{t\rightarrow \infty} \myexpect{\!}{\mynorm{x_t}} < \infty,
\end{equation}
for any initial state $x_0\in \left(-x_{0,\max},x_{0,\max}\right)$.
{The long-term average cost of the stabilized plant state in the mean-square sense is defined~as~\cite{Nair09,Nair2004} 
\begin{equation} \label{cost}
\myA \triangleq \limsup_{t\rightarrow \infty} \myexpect{\!}{\mynorm{x_t}}.
\end{equation}

Note that the average cost of the plant indicates the mean-square error (MSE) performance of the plant state. 
For example, we consider a speed control problem of an automated guided vehicle (AGV) in IIoT applications.
Let $x'(t)$ and $\tilde{v}$ denote the real speed at time $t$ and the default speed, respectively.
Thus, $x_t \triangleq x'_t-\tilde{v}$ denotes the error of speed at time $t$.
Then, the MSE of the speed at $t$ is $\myexpect{\!}{\mynorm{x_t}}$. 
When considering the long-term performance of the speed-control process, $\limsup_{t\rightarrow \infty} \myexpect{\!}{\mynorm{x_t}}$ can be treated as an upper bound of the long-term average MSE~\cite{Nair2004,Nair09}.}

Based on the same approach adopted in \cite{Nair09}, we can obtain the following stability condition:
\begin{lemma}\label{lemma:1}
	\normalfont
Given the channel coding scheme in Sec.~\ref{coding}, and the quantization and control policy in \eqref{coder} and \eqref{decoder}, the necessary and sufficient condition for stabilizing the plant \eqref{system} in the mean-square sense \eqref{first_condition} is:
\begin{equation}
\epsilon a^{2 n} + (1-\epsilon) \frac{a^{2 n}}{2^{2nR}} < 1.
\end{equation}
\end{lemma}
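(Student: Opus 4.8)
The plan is to analyse the closed loop \eqref{system} in blocks of $n$ slots and to show that, after averaging over the Bernoulli$(\epsilon)$ decoding outcome of a block, the second moment of the state is multiplied per block by the factor $\rho \triangleq \epsilon a^{2n}+(1-\epsilon)a^{2n}2^{-2nR}$; mean-square stability \eqref{first_condition} is then achievable by some admissible policy \eqref{coder}--\eqref{decoder} if and only if $\rho<1$. The two directions reuse the two halves of the finite-data-rate machinery of \cite{Nair2004,Nair09}: an explicit adaptive quantizer for achievability and a differential-entropy converse for impossibility, the only new ingredient being the two-regime (erasure versus success) per-block update inherited from Assumption~1 and the AWGN-channel model.

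For sufficiency ($\rho<1$) I would exhibit an explicit scheme and bound it with a stochastic-interval argument. Because there is no memory constraint, the controller observes $x_0^t$ and the feedback $s_1^{t}$ and can therefore reconstruct exactly the actuator's residual uncertainty interval about the block-boundary state, of length $\Delta_k$ at time $kn$ (with $\Delta_0$ proportional to $x_{0,\max}$). In each block the controller propagates this interval (scaled by $a^n$ and inflated by the bounded in-block disturbance), splits it into $2^{nR}$ equal pieces, and sends the index of the piece predicted to contain the next block-boundary state; if the codeword is decoded ($s=1$) the actuator cancels the corresponding midpoint through $u$, contracting the uncertainty by the extra factor $2^{-nR}$, and otherwise ($s=0$) the uncertainty is only scaled by $|a|^n$. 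With $D\triangleq\sum_{j=0}^{n-1}|a|^{j}w_{\max}$ bounding the in-block disturbance accumulation, this gives the random recursion $\Delta_{k+1}\le|a|^{n}2^{-nR\,s_{(k+1)n}}\Delta_k+D$ (up to a fixed one-block pipeline shift, which is routine bookkeeping). Squaring, using $(u+D)^2\le(1+\eta)u^2+(1+\eta^{-1})D^2$ together with the independence of $s_{(k+1)n}$ from $\Delta_k$ and $\mathsf{E}[2^{-2nR\,s_{(k+1)n}}]=\epsilon+(1-\epsilon)2^{-2nR}$, gives $\mathsf{E}[\Delta_{k+1}^2]\le(1+\eta)\rho\,\mathsf{E}[\Delta_k^2]+(1+\eta^{-1})D^2$. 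Choosing $\eta>0$ with $(1+\eta)\rho<1$, possible exactly because $\rho<1$, makes $\sup_k\mathsf{E}[\Delta_k^2]<\infty$; since $\mynorm{x_t}$ for any $t$ in the $k$-th block is at most a constant times $\Delta_k^2$ plus a constant, \eqref{first_condition} follows.

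For necessity I would show that $\rho\ge1$ forces $\mathsf{E}[\mynorm{x_{kn}}]\to\infty$ for every admissible policy, using the conditional differential-entropy recursion of \cite{Nair2004,Nair09}. Let $\mathcal{G}_{kn}$ collect all information available at the actuator at time $kn$, put $h_k(g)\triangleq h(x_{kn}\mid\mathcal{G}_{kn}=g)$ and $V_k\triangleq\mathsf{E}[2^{2h_k(\mathcal{G}_{kn})}]$. Writing $x_{(k+1)n}=a^n x_{kn}+(\text{controls, known at the actuator})+W'$ with $W'$ the in-block disturbance, independent of $x_{kn}$ and of $\mathcal{G}_{kn}$, and splitting on the block outcome — nothing is learnt about $x_{kn}$ when the codeword is lost, and at most $nR$ bits are learnt when it is decoded — the entropy power inequality, the averaged bound $h(X\mid M)\ge h(X)-nR$ for $|M|\le2^{nR}$, Jensen's inequality for $2^{2(\cdot)}$, and the fact that the decoding-error event is a function of the channel noise and the transmitted codeword only (hence independent of the message given the codeword) combine to give $V_{k+1}\ge\rho\,V_k+c$ with $c\triangleq2^{2h(W')}>0$. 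Thus $V_k\to\infty$ whenever $\rho\ge1$ (geometrically if $\rho>1$, linearly in $k$ through the additive $c$ if $\rho=1$), and the conditional maximum-entropy bound $\mathsf{E}[\mynorm{x_{kn}}]\ge\frac{1}{2\pi e}V_k$ contradicts \eqref{first_condition}.

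I expect the converse to be the real obstacle: one must cast the entropy recursion in the conditional form that makes the $\le nR$-bit message reduction and the $a^{2n}$ open-loop growth appear cleanly, verify that the decoding-error indicator is genuinely independent of the source so that averaging over the erasure outcome factorises into the weights $\epsilon$ and $1-\epsilon$, and keep rather than discard the additive term $c$ to settle the boundary case $\rho=1$ (which tacitly uses the same regularity on $w_t$ — a density, so $h(W')>-\infty$ — as in \cite{Nair2004,Nair09}). Achievability is comparatively routine, the one point to get right being that the controller really knows the actuator's uncertainty interval, which holds here precisely because it sees both $x_0^t$ and the feedback $s_1^{t}$.
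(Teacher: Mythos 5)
Your proposal is correct and follows essentially the same route as the paper: the paper proves Lemma~\ref{lemma:1} by reducing it to Theorem~4.1 of \cite{Nair09}, modelling the backward link as an i.i.d.\ rate process taking the values $R$ and $0$ with probabilities $1-\epsilon$ and $\epsilon$, and the proof of that cited theorem is precisely the two-part machinery you reconstruct (adaptive-quantizer/uncertainty-interval contraction for sufficiency, and the maximum-entropy/entropy-power converse for necessity). Your treatment of the i.i.d.\ Bernoulli decoding outcomes matches the paper's Assumption~1 and its packet-erasure modelling, so no gap relative to the paper's own argument.
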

\begin{proof}
	See Appendix A
\end{proof}

As described in Sec.~\ref{coding}, $n$ and $R$ are the key design parameters of the codebook for channel coding, while $\epsilon$ can be treated as a performance measure of the codebook. Thus, we consider a family of optimal channel codes $(n, R)$ that achieves the minimum error probability in \eqref{Yury_e}. Thus, we have the following definition.
\begin{definition}[Stability Region] \label{def:stability}
	\normalfont
	Given the channel coding scheme in Sec.~\ref{coding}, and the quantization and control policy in \eqref{coder} and \eqref{decoder},
	the stability region of the plant \eqref{system}, is the set of the communication parameters $(n,R)$ of the optimal channel codes
	such that the plant is stabilized in the mean-square sense~\eqref{first_condition}.
\end{definition}

Based on Definition~\ref{def:stability}, by taking \eqref{Yury_e} into Lemma~\ref{lemma:1}, we have the stability region of the plant in terms of the design parameters $(n, R)$ of the communication system.

\begin{corollary}\label{theo:1}
	\normalfont	
	Assuming sufficiently large blocklength, the stability region of the plant \eqref{system} is approximated as
	\begin{equation}
\begin{aligned}
	\mathcal{S}:=
	&\left\lbrace  (n, R): 
		Q\left(\sqrt{\frac{n}{\nu}}\left(C-R\right)\right) a^{2 n} \right.\\		
	&\left.	+\! \left(\!1 - \! Q\left(\!\sqrt{\frac{n}{\nu}}\left(C-R\right)\!\right)\!\right)\! \frac{a^{2 n}}{2^{2nR}} < 1, \forall n \in \mathbb{Z}_+, R>0
	\!\right\rbrace.
\end{aligned}
	\end{equation}
\end{corollary}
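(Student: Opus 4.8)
The plan is essentially to substitute and simplify, since Corollary~\ref{theo:1} is a direct consequence of Lemma~\ref{lemma:1} and Definition~\ref{def:stability}. First I would recall from Definition~\ref{def:stability} that the stability region is the set of $(n,R)$ \emph{of the optimal channel codes}, i.e., those codes whose decoding-error probability attains the finite-blocklength value \eqref{Yury_e}, namely $\epsilon \approx Q\!\left(\sqrt{n/\nu}\,(C-R)\right)$. So the only substantive step is to take the stability inequality of Lemma~\ref{lemma:1},
\begin{equation*}
\epsilon a^{2n} + (1-\epsilon)\frac{a^{2n}}{2^{2nR}} < 1,
\end{equation*}
and plug in this expression for $\epsilon$. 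This yields precisely the defining inequality of $\mathcal{S}$, with the quantifiers $n\in\mathbb{Z}_+$ and $R>0$ carried over from the domain of the design parameters.

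Next I would address the word ``approximated.'' The approximation is inherited entirely from \eqref{Yury_R}--\eqref{Yury_e}: the normal-approximation expression for the maximum achievable rate (and equivalently the minimum error probability) in the finite-blocklength regime is accurate for $n$ not too small (the paper cites $n\geq 100$ in \eqref{Yury_R}), hence the phrase ``assuming sufficiently large blocklength.'' I would make this explicit: Lemma~\ref{lemma:1} itself is exact, so the region
\begin{equation*}
\left\{(n,R): \epsilon^\star(n,R)\,a^{2n} + \bigl(1-\epsilon^\star(n,R)\bigr)\frac{a^{2n}}{2^{2nR}} < 1\right\}
\end{equation*}
is the exact stability region for the family of optimal codes, where $\epsilon^\star(n,R)$ is the true minimum error probability of an $(n,R)$ code over the AWGN channel; replacing $\epsilon^\star$ by the normal approximation $Q\!\left(\sqrt{n/\nu}\,(C-R)\right)$ gives $\mathcal{S}$.

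I would also note the one monotonicity fact that makes the substitution legitimate without extra case analysis: for fixed $n$ and $R$, the left-hand side of the stability inequality in Lemma~\ref{lemma:1} is increasing in $\epsilon$ (its $\epsilon$-derivative is $a^{2n}(1 - 2^{-2nR}) > 0$ since $|a|>1$ and $R>0$), so using the \emph{minimum} achievable $\epsilon$ for a given $(n,R)$ is exactly what characterizes when \emph{some} optimal code stabilizes the plant — this is what pins down ``the optimal channel codes'' in Definition~\ref{def:stability} as the relevant family. After that, it is just a matter of writing the resulting set in the displayed form.

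The main obstacle, such as it is, is not a mathematical difficulty but a matter of being careful about what ``approximated'' quantifies over and about the direction of the monotonicity argument: one must make sure that taking the smallest $\epsilon$ for each $(n,R)$ indeed yields the \emph{largest} stability region (so that membership in $\mathcal{S}$ is equivalent to existence of a stabilizing optimal code), rather than, say, an intersection over all codes. Since the left-hand side is monotone increasing in $\epsilon$, this goes through cleanly, and no further estimation is needed.
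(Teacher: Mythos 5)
Your proposal matches the paper's own derivation: Corollary~\ref{theo:1} is obtained exactly by substituting the finite-blocklength error-probability approximation \eqref{Yury_e} into the inequality of Lemma~\ref{lemma:1}, with the ``sufficiently large blocklength'' caveat inherited from the accuracy range of \eqref{Yury_R}--\eqref{Yury_e}. Your added monotonicity-in-$\epsilon$ remark is a sensible (and correct) clarification of why using the minimum achievable $\epsilon$ for each $(n,R)$ characterizes the region, but it does not change the route --- this is essentially the same argument as in the paper.
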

\begin{remark}
If a pair $(n,R) \in \mathcal{S}$, there exist a channel coding scheme $\mathcal{C}$ and a pair of encoder-decoder $\left(\mathcal{E}_t,\mathcal{D}_t\right)$ that stabilize the plant. Otherwise, the plant cannot be stabilized.
\end{remark}

From Corollary~\ref{theo:1}, we have the the following property of the stability region.
\begin{proposition} \label{prop:bound_nair}
	\normalfont	
	Assume the stability region is not empty based on Corollary~\ref{theo:1}. If $\log_2 a \geq C-2 \sqrt{\nu\ln a}$, the two-dimensional stability region is $n$-bounded, otherwise, the region is $n$-unbounded when $R \in \left(\log_2 a, C- 2 \sqrt{\nu \ln a} \right)$.
\end{proposition}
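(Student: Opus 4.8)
The plan is to analyze the stability inequality from Corollary~\ref{theo:1} asymptotically as $n \to \infty$ and identify exactly when it can (or cannot) be satisfied for arbitrarily large $n$. Write $f(n,R) \triangleq Q\!\left(\sqrt{\tfrac{n}{\nu}}(C-R)\right) a^{2n} + \left(1 - Q\!\left(\sqrt{\tfrac{n}{\nu}}(C-R)\right)\right) \tfrac{a^{2n}}{2^{2nR}}$ and note $a^{2n} = 2^{2n\log_2 a} = e^{2n\ln a}$. The key observation is that for $(n,R)$ to be usable with arbitrarily large $n$ we need $f(n,R) < 1$ for an unbounded set of $n$, and since the dominant growth is governed by exponents, the behaviour is dictated by a competition between $a^{2n}$ (growing, since $|a|>1$), the decay of $Q(\cdot)$, and the decay of $2^{-2nR}$.

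First I would examine the second term: $a^{2n} 2^{-2nR} = 2^{2n(\log_2 a - R)}$, which $\to 0$ iff $R > \log_2 a$. So for the region to be $n$-unbounded we certainly need $R > \log_2 a$ (this recovers the classical data-rate theorem threshold). Next I would examine the first term, $Q\!\left(\sqrt{\tfrac{n}{\nu}}(C-R)\right) e^{2n\ln a}$. If $R \geq C$, then $C - R \leq 0$, so the $Q$ argument is non-positive and $Q(\cdot) \geq 1/2$, making the first term blow up — so we need $R < C$ as well. For $\log_2 a < R < C$, use the standard Gaussian tail bound $Q(x) \leq \tfrac12 e^{-x^2/2}$ (and the matching lower bound $Q(x) \geq \tfrac{x}{1+x^2}\tfrac{1}{\sqrt{2\pi}}e^{-x^2/2}$ for $x>0$) to get $Q\!\left(\sqrt{\tfrac{n}{\nu}}(C-R)\right) e^{2n\ln a} \asymp e^{-\frac{n(C-R)^2}{2\nu}} e^{2n\ln a} = \exp\!\left(n\left(2\ln a - \tfrac{(C-R)^2}{2\nu}\right)\right)$ up to polynomial factors. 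This term vanishes iff $2\ln a - \tfrac{(C-R)^2}{2\nu} < 0$, i.e. $(C-R)^2 > 4\nu\ln a$, i.e. $C - R > 2\sqrt{\nu\ln a}$ (taking the positive root since $R<C$), equivalently $R < C - 2\sqrt{\nu\ln a}$.

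Combining: when $\log_2 a < R < C - 2\sqrt{\nu\ln a}$ — a nonempty interval precisely when $\log_2 a < C - 2\sqrt{\nu\ln a}$ — both terms of $f(n,R)$ tend to $0$ as $n\to\infty$, so for every sufficiently large $n$ the pair $(n,R)$ lies in $\mathcal{S}$; hence the region is $n$-unbounded. Conversely, if $\log_2 a \geq C - 2\sqrt{\nu\ln a}$, then for every fixed $R$ at least one of the two exponents is non-negative: either $R \leq \log_2 a$ and the second term does not vanish (in fact $\geq 1$ eventually, forcing $f \geq 1$), or $R > \log_2 a \geq C - 2\sqrt{\nu\ln a}$ so $C - R < 2\sqrt{\nu\ln a}$ and the first-term exponent $2\ln a - \tfrac{(C-R)^2}{2\nu} \geq 0$, so that term is bounded below by a positive constant (times a polynomial factor) and does not vanish; either way $\limsup_{n\to\infty} f(n,R) \geq$ a positive constant, and more carefully one shows $f(n,R)\ge 1$ for all large $n$, so only finitely many $n$ can work for each $R$, giving $n$-boundedness. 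I would also need the edge case $R = C - 2\sqrt{\nu\ln a}$ exactly, where the first exponent is zero and the polynomial prefactor from the tighter $Q$-bound decides — here $Q(x)\sim \tfrac{1}{x\sqrt{2\pi}}e^{-x^2/2}$ with $x = \sqrt{n/\nu}\,(C-R)$ growing like $\sqrt n$, so the first term behaves like $n^{-1/2}\to 0$, meaning the boundary line itself belongs to the unbounded case; this is why the proposition's open/closed phrasing places the threshold at $R \in (\log_2 a,\, C - 2\sqrt{\nu\ln a})$ for unboundedness and uses a non-strict inequality $\log_2 a \ge C - 2\sqrt{\nu\ln a}$ for boundedness.

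The main obstacle I anticipate is making the "converse" direction rigorous: showing not merely that $f(n,R)$ fails to vanish but that $f(n,R) \geq 1$ for all sufficiently large $n$ whenever $\log_2 a \geq C - 2\sqrt{\nu\ln a}$, uniformly enough to conclude only finitely many admissible $n$. This requires care because $f$ is a sum of two competing terms and one must rule out cancellation — but since both terms are manifestly nonnegative, it suffices to lower-bound whichever term has the nonnegative exponent, using the one-sided Gaussian tail bound $Q(x)\ge \tfrac{1}{\sqrt{2\pi}}\tfrac{x}{1+x^2}e^{-x^2/2}$ for the first term and the exact expression $2^{2n(\log_2 a - R)}$ for the second. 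The rest is bookkeeping with the two exponent signs and the polynomial prefactors on the boundary.
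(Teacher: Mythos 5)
Your proposal is correct and follows essentially the same route as the paper's Appendix~B: it analyzes the two terms of the stability inequality in Corollary~\ref{theo:1} separately as $n\to\infty$, concluding that the term $Q\left(\sqrt{n/\nu}\,(C-R)\right)a^{2n}$ vanishes iff $R \leq C-2\sqrt{\nu\ln a}$ and that the term $a^{2n}2^{-2nR}$ stays below one iff $R > \log_2 a$, and then intersects these conditions. The only differences are cosmetic: you use standard Gaussian tail upper and lower bounds where the paper invokes the asymptotic expansion of $Q(x)$, and you spell out the converse direction ($f(n,R)\geq 1$ for all large $n$) somewhat more explicitly than the paper does.
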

\begin{proof}
See Appendix B.
\end{proof}

By using Corollary~\ref{theo:1}, the stability regions with different $\snr$ conditions are plotted in Fig.~\ref{fig:region-nair}. From Proposition~\ref{prop:bound_nair}, the stability region is $n$-bounded when $\snr \leq 10.3$~dB, which has been verified in Fig.~\ref{fig:region-nair} that the region with $\snr =10$~dB is $n$-bounded, while the regions with $\snr = 11$~dB and $20$~dB are $n$-unbounded within the $R$-interval of $\left(\log_2 a, C- 2 \sqrt{\nu \ln a} \right)$.

\begin{figure}[t]
	\renewcommand{\captionfont}{\small} \renewcommand{\captionlabelfont}{\small}
	\centering
	\includegraphics[scale=0.67]{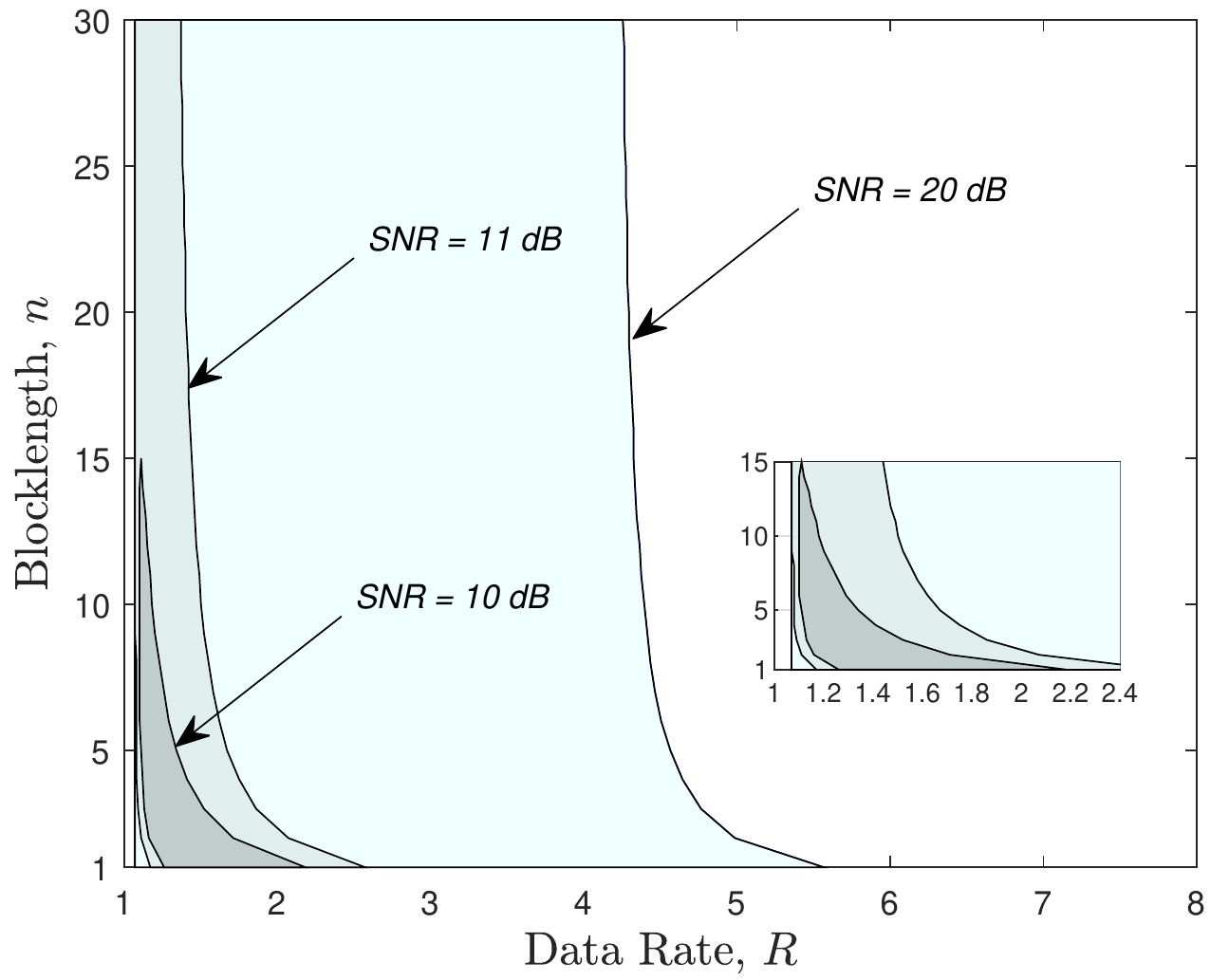}
	\caption{The stability region of a plant in terms of the data rate and the blocklength, where $a = 2.1$ and $b=1$. The bounded dark colored region and the unbounded moderate and light colored regions are plotted at $\snr = 10$, $11$ and $20$ dB, respectively.}
	\label{fig:region-nair}
\end{figure}

\begin{remark}
From Proposition~\ref{prop:bound_nair}, an $n$-unbounded stability region suggests that the plant can be stabilized even with an arbitrarily large latency caused by channel coding.	
	
Based on the definitions in \eqref{C} and \eqref{v}, it can be proved that $C-2 \sqrt{\nu\ln a}$ monotonically increases from zero to infinity as $\snr$ increases.
Therefore, in Proposition~\ref{prop:bound_nair}, the stability region is $n$-unbounded if and only if the SNR is greater than the one yielding the equality in Proposition~\ref{prop:bound_nair}, i.e., $\log_2 a = C-2 \sqrt{\nu\ln a}$.
\end{remark}

Intuitively, since a long blocklength introduces a large delay for control, it seems that the plant cannot be stabilized. However, the stability region result indicates that the plant can be stabilized with an arbitrarily large blocklength as long as the SNR is larger than a certain value determined by the parameter of the plant.
In the rest part of the paper, we can explain this counter-intuitive result based on the performance analysis of the WNCS.

\section{Modeling of a Practical WNCS}\label{sec:new}
\noindent
In the previous section, although we have obtained the stability region of the plant with ideal quantization/de-quantization method that utilizes the historical information and requires infinite memory, it is still unclear how to optimally design the parameters within the stability region such that the average cost of the WNCS is minimized.
The main difficulty is due to the fact that the average cost function does not have an explicit expression in terms of $n$ and $R$, as a finite-bit quantizer can always introduce a non-linear and discontinuous quantization function which makes the performance analysis of the dynamic process $x_t$ in the mean-square sense not only difficult but impossible.

In this section, we focus on the WNCS with practical control and quantization methods that only depend on the current status of the WNCS and require small memory for encoding and decoding at the controller and the actuator. Such a practical WNCS will also enable tractable control-performance analysis of the average cost function.

\subsection{Control Policy} \label{sec:control}
Let $\tilde{\mu}_t$ denote the control signal generated at time $t$ and expected to be received by the actuator at time $t+n$ due to the transmission latency.
We adopt the commonly used linear–quadratic–Gaussian (LQG) method for generating the control signal $\tilde{\mu}_t$~\cite{TCP}, i.e.,
\begin{equation} \label{control_sig}
\tilde{\mu}_t = \kappa \hat{x}_{t+n},
\end{equation}
where $\kappa = - a/b$ for the scalar system~\eqref{system} and $\hat{x}_{t+n}$ is the estimated system state at time $(t+n)$.

We assume that the actuator adopts a zero-hold strategy~\cite{TCP,KangTWC,KangICC,KangJIoT}: the actuator controls the plant only in the time slot in which a received codeword has been decoded, and does not control the plant in the other time slots by letting the control signal to be equal to zero. {This assumption is also motivated by the data integrity requirement in the industrial control applications. In other words, the actuator cannot estimate the plant state nor generate control action by itself and can only control the plant when it receives the control signal.}

Since there is no control action between $t$ and $t+n$, the optimal state estimation can be obtained from \eqref{system} as~\cite{DanielTradeoff,KangTWC,KangICC}
\begin{equation} \label{estimat}
\begin{aligned}
\hat{x}_{t+n} &= a^{n-1} \hat{x}_{t+1}\\
&= a^{n-1} \times 
\begin{cases}
a x_t & t\!= k n, k\! \in \mathbb{Z}_0, s_t\!=0,\\
(a x_t+ b u_t) & t\!= k n, k\! \in \mathbb{Z}_0, s_t\!=1.
\end{cases}
\end{aligned}
\end{equation}

Taking \eqref{estimat} into \eqref{control_sig}, the control signal at the controller before quantization can be written~as
\begin{equation} \label{command}
\tilde{\mu}_t  = 
\begin{cases}
-\frac{a^{n+1}}{b} {x}_t
, & t\!= k n, k\! \in \mathbb{Z}_0, s_t\!=0\\
-\frac{a^{n}}{b} \left(a x_t + b {u}_{t}\right)
, & t\!= k n, k \!\in \mathbb{Z}_0, s_t\!=1\\
 0, & \text{else}.
\end{cases}
\end{equation}
Then, the generated control codeword at time $kn$, which will be transmitted from $(t+1)$ to $(t+n)$, is given as
\begin{equation}
 \mu^{t+n}_{t+1} = \tilde{\mathcal{C}}\left( \hat{\mathbb{Q}} \left(\tilde{\mu}_t\right)\right) \triangleq 
 c_{\hat{\mathbb{Q}}\left( \tilde{\mu}_t\right) },
\end{equation}
where the function $\hat{\mathbb{Q}}(\cdot)$ outputs the index number of the quantized signal, and $\tilde{\mathcal{C}}(\cdot)$ and $c_i$ have been defined in Sec.~\ref{coding}.

Therefore, the actuator's control action can be written as
\begin{equation} \label{action}
u_t= \mathcal{D}(\mu^{t}_{t+1-n})  \triangleq 
\left\lbrace
\begin{aligned}
&\tilde{\mu}_{t-n} + v_{t}, && t= k n, k \in \mathbb{Z}_+, s_t=1,\\
& 0, && \text{else},
\end{aligned}
\right. 
\end{equation}
where $v_t$ is the quantization noise caused by the $(n R)$-bit quantizer at time $t-n$.

Fig.~\ref{fig:illu-process} illustrates the plant state controlled by the control codewords with correct and incorrect decoding.
It is clear that if a consecutive incorrect detection occurs, the absolute value of the plant state can be very large, while if a correct detection occurs, the plant state can be close to zero.
Thus, the value of the control signal in \eqref{command} has a large range. 
To handle the quantization of the wide-range dynamic signal, we need a quantizer that can adaptively change its quantization range.

\begin{figure}
	\centering
\begin{tikzpicture}[scale=0.9]

\draw [-latex] (-4,0) node[scale=0] (v1) {} -- (4,0);
\draw [-latex] (-4,-1.4) -- (-4,2.9);

\draw (-4,0.5) -- (-3.5,0.5);
\draw (-3.5,0.5) .. controls (-2.9,0.6) and (-2.6,1.35) .. (-2.5,2) node (v2) {};
\node (v3) at (-2.5,-0.3) {};
\node at (-2,-0.3) {};
\draw (-2,-0.3) node (v4) {} .. controls (-1.5,-0.3) and (-1,-0.8) .. (-1,-1.4) node (v5) {};
\draw (-2.5,2) -- (-2.5,-0.3) -- (-2,-0.3);
\draw (-1,-1.4) -- (-1,0.25) -- (-0.5,0.25);
\draw (-0.5,0.25) .. controls (0.8,0.5) and (1.5,1.25) .. (2,2.45) node (v6) {};
\draw (2,2.45) -- (2,0.35) -- (2.5,0.35);
\draw (2.5,0.35) .. controls (3,0.4) and (3.4,0.65) .. (3.5,1.35);
\node at (-3.35,2.7) {$x_t$};
\node at (3.85,-0.5) {$t$};
\draw  [thick] (-3.5,0.05) -- (-3.5,-0.1);
\draw [thick]  (-3,0.05) -- (-3,-0.1);
\draw [thick] (-2.5,-0.1) -- (-2.5,0.05);
\draw [thick] (-2,0.05) -- (-2,-0.1);
\draw [thick] (-1.5,-0.1) -- (-1.5,0.05);
\draw [thick] (-1,0.05) -- (-1,-0.1);
\draw  [thick](-0.5,0.05) -- (-0.5,-0.1);
\draw [thick] (0,0.05) -- (0,-0.1);
\draw [thick] (0.5,0.05) -- (0.5,-0.1);
\draw [thick] (1,0.05) -- (1,-0.1);
\draw [thick] (1.5,0.05) -- (1.5,-0.1);
\draw [thick] (2,0.05) -- (2,-0.1);
\draw [thick] (2.5,0.05) -- (2.5,-0.1);
\draw [thick] (3,0.05) -- (3,-0.1);
\draw [thick] (3.5,0.05) -- (3.5,-0.1);
\node [fill=blue,circle,scale=0.5] at (-3.5,0.5) {};
\node [fill=blue,circle,scale=0.5] at (-3,0.8) {};
\node [fill=blue,circle,scale=0.5] at (-2.5,2) {};
\node [fill=blue,circle,scale=0.5] at (-2,-0.3) {};
\node [fill=blue,circle,scale=0.5] at (-1.5,-0.45) {};
\node [fill=blue,circle,scale=0.5] at (-1,-1.4) {};
\node [fill=blue,circle,scale=0.5] at (-0.5,0.25) {};
\node [fill=blue,circle,scale=0.5] at (0,0.4) {};
\node [fill=blue,circle,scale=0.5] at (0.5,0.6) {};
\node [fill=blue,circle,scale=0.5] at (1,0.95) {};
\node [fill=blue,circle,scale=0.5] at (1.5,1.5) {};
\node [fill=blue,circle,scale=0.5] at (2,2.45) {};
\node [fill=blue,circle,scale=0.5] at (2.5,0.35) {};
\node [fill=blue,circle,scale=0.5] at (3,0.5) {};
\node [fill=blue,circle,scale=0.5] at (3.5,1.35) {};

\draw [blue,thick]  (-4,-1.75) rectangle (-2.5,-2.25);
\draw [blue,thick]   (-2.5,-1.75) rectangle (-1,-2.25);
\draw [red,thick]   (-1,-1.75) rectangle (0.5,-2.25);
\draw [blue,thick]   (0.5,-1.75) rectangle (2,-2.25);
\draw [blue,thick]   (2,-1.75) rectangle (3.5,-2.25);
\node at (-3.25,-2) {Correct};
\node at (-1.75,-2) {Correct};
\node at (-0.25,-2) {Incorrect};
\node at (1.25,-2) {Correct};
\node at (2.75,-2) {Correct};
\end{tikzpicture}
\caption{Illustration of the plant state, $x_t$, and the correctness of the codeword decoding, when $n=3$.}
\label{fig:illu-process}
\end{figure}
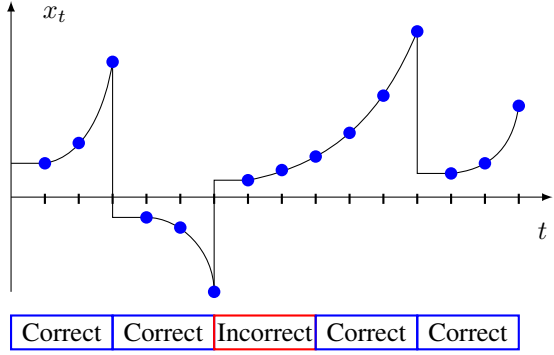

\subsection{Construction of a Zooming Quantizer} \label{sec:quantizer}

We consider a  value-independent zooming quantizer with two construction steps:

First, the $2^{nR}$-step uniform quantizer with the range $(-H,H)$ is defined as
\begin{equation} \label{quantizer}
\myQQ_H(x)=
\left\lbrace
\begin{aligned}
&\left(k-\frac{1}{2}\right) \varDelta, && \text{if } x \in \left[(k-1)\varDelta,k\varDelta  \right)\\
&\left(-k+\frac{1}{2}\right) \varDelta, && \text{if } x \in \left[-k\varDelta,(-k+1)\varDelta  \right)\\
& 0, &&\text{else}\\
\end{aligned}
\right.
\end{equation}
where $\varDelta = H / 2^{nR-1}$ and $k = 1,2,...,2^{nR-1}$.

Second, the quantization-range updating rule is defined as
\begin{equation} \label{update rule}
H_{t+1} = 
\left\lbrace
\begin{aligned}
&L H_t, && t= k n, k \in \mathbb{Z}_+, s_t=0\\
&\max\left\lbrace \Xi(0), \frac{1}{L} H_t \right\rbrace, && t= k n, k \in \mathbb{Z}_+, s_t=1\\
& H_t, && \text{else},
\end{aligned}
\right. 
\end{equation}
where $(-\Xi(0),\Xi(0))$ is the minimum quantization range, and $L$ is the scaling factor of the quantization-range updating rule. 
In other words, the quantization range enlarges and shrinks when incorrect and correct control codewords are detected, respectively.
Assuming that $\Xi(0)$, $L$ and the initial quantization range $(-H_0,H_0)$ are known by both the controller and the actuator, they can update the quantizer in a synchronous manner based on the shared information $s_t,\forall t$.

Therefore, this zooming quantizer is defined as $\myQQ_{H_t}(\cdot)$.
Note that the pre-quantized control signal $\tilde{\mu}_t, \forall t = kn$, is quantized by $\myQQ_{H_{t}}(\cdot)$ with the quantization range $(-H_{t},H_{t})$.

Furthermore, it is easy to see that the quantization-range updating forms a Markov chain with the states $\Xi(i) \triangleq L^{i} \Xi(0), \forall i \in \mathbb{Z}_0$, and the state transition probability is defined as
\begin{equation}
\myprobability{H_{t+n} = \Xi(j) \vert H_t = \Xi(i)} = 
\left\lbrace
\begin{aligned}
& 1-\epsilon ,&& i\geq 0,j=[i-1]^+\\
& \epsilon ,&& i\geq 0,j=i+1\\
& 0 ,&& \text{else}\\
\end{aligned}
\right.
\end{equation}
where $[z]^+ \triangleq \max\{z,0\}$.
The stationary distribution of the Markov chain can be calculated as
\begin{equation} \label{Phi}
\Phi_i\triangleq \myprobability{H_t = \Xi(i)} = \left(\frac{\epsilon}{1-\epsilon}\right)^{i}\frac{1-2 \epsilon}{1-\epsilon}.
\end{equation}

In what follows, we analyze the condition of the existence of the zooming-quantizer such that the control signal before quantization, $\tilde{\mu}_t$, is always within the quantization range $(-H_t,H_t)$, i.e.,
\begin{equation} \label{condition:signal}
-H_t <\tilde{\mu}_t < H_t.
\end{equation}
and
\begin{equation}\label{condition:step}
-\frac{H_t}{2^{nR-2}} < v_t < \frac{H_t}{2^{nR-2}}.
\end{equation}

\subsection{The Existence of a Zooming Quantizer}\label{sec:analysis}
Taking \eqref{command} and \eqref{action} into \eqref{system}, the plant state at $t+n$ can be written as
\begin{equation} \label{x_t+n}
\begin{aligned}
{x}_{t+n}  &= a^{n-1} \left(a x_t + b u_t +w_{t} \right) + a^{n-2}w_{t+1} + \cdots + w_{t+n}\\
&= a^{n-1} \left(a x_t + b u_t \right) + \sum_{i=0}^{n-1} a^{n-1-i}w_{t+i}\\
&\triangleq \hat{x}_{t+n} + \omega_{t+n},
\end{aligned}
\end{equation}
where
\begin{equation}
-\omega_{\max}<\omega_{t+n} < \omega_{\max},
\end{equation}
and 
\begin{equation}
\omega_{\max} \triangleq  w_{\max}\sum_{i=0}^{n-1} a^{n-i}
= w_{\max} \frac{a^n-1}{a-1}.
\end{equation}

Based on \eqref{command} and \eqref{x_t+n}, we have
\begin{equation} \label{estimation}
\hat{x}_{t+n} = 
\left\lbrace
\begin{aligned}
&a^n x_0 ,&&\hspace{-0.3cm} t =0\\
&
\begin{aligned}
&a^n \left(x_t - \hat{x}_t\right) + a^{n-1} b v_t \\
&= a^n \omega_{t} + a^{n-1}b v_t,
\end{aligned} &&\hspace{-0.3cm} t= k n, k \in \mathbb{Z}_+, s_t = 1\\
&\begin{aligned}
&a^n x_t =\!a^n \left(\!x_t - \hat{x}_t\!\right) \!+\! a^n  \hat{x}_t\\
&=a^n \omega_{t} + a^{n-1}b \left(\frac{a}{b}  \hat{x}_t\right)
,
\end{aligned} &&\hspace{-0.3cm}t= k n, k \in \mathbb{Z}_+, s_t = 0 \\
\end{aligned}
\right.
\end{equation}

Furthermore, to satisfy the quantization-range updating rule \eqref{update rule}, we have the following conditions:
\begin{align}
\left\lbrace
\begin{aligned}
&\text{If }\!t = 0, -\frac{a}{b} \hat{x}_{t+n} \in \left(-H_0, H_0\right)\!;\\
&\text{If }\! t= k n, k \in \mathbb{Z}_+, s_t=0, \  -\frac{a}{b}\hat{x}_{t+n} \in \left(-L H_t, L H_t\right)\!; \\
&\text{If }\! t\!=\! k n, k \in \mathbb{Z}_+, s_t\!=\!1\text{ and }H_t \!>\! \Xi(0),  \!-\!\frac{a}{b}\hat{x}_{t+n} \!\in \!\left(\!-\!\frac{H_t}{L} ,\! \frac{H_t}{L}\! \right)\! ;\\
&\text{If }\! t\!=\! k n, k \in \mathbb{Z}_+, s_t\!=\!1\text{ and }H_t \!=\! \Xi(0), \!-\!\frac{a}{b}\hat{x}_{t+n} \!\in \!\left(\!-\Xi(0),\! \Xi(0)\!\right)\!;
\end{aligned}
\right.
\end{align}
which can be simplified based on \eqref{estimation} as
\begin{align} \label{condition_Q}
\left\lbrace
\begin{aligned}
& H_0 > \frac{a^{n+1}}{b} x_{0,\max}\\
&\left(L - a^n\right)\Xi(0) > \frac{a^{n+1}}{b} \frac{a^n-1}{a-1} w_{\max} \\
&\left(1 - L \frac{a^{n}}{2^{nR-2}}\right) \Xi(0) > \frac{a^{n+1}}{b} \frac{a^n-1}{a-1} w_{\max} .
\end{aligned}
\right.
\end{align}
Without loss of generality, the initial quantization range can be set as the minimum range in \eqref{update rule} that satisfies the first condition in \eqref{condition_Q}, i.e.,
\begin{equation}
H_0 = \Xi\left(\left\lceil \log_L\left(\frac{a^{n+1} x_{0,\max}}{b \Xi(0)}\right) \right \rceil\right).
\end{equation}

Therefore, the condition \eqref{condition_Q} can be satisfied as long as $\Xi(0)$ is sufficiently large and there exists $L$ such that
\begin{equation}\label{quan_condition}
a^n < L < \frac{2^{nR-2}}{a^n}.
\end{equation}
After simplification, we have the following result.
\begin{proposition} \label{prop:quantizer}
	\normalfont
The necessary and sufficient condition of the existence of a zooming quantizer $\myQQ_{H_t}(\cdot)$ for which the pre-quantized signal is always within the quantization range, is given by
\begin{equation} \label{condition_Q_final}
n > \frac{2}{R-2 \log_2 a},\text{ and } R > 2 \log_2 a.
\end{equation}
\end{proposition}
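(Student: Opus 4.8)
The plan is to reduce the two containment requirements \eqref{condition:signal}--\eqref{condition:step} to a finite system of inequalities in the quantizer parameters $L,\Xi(0),H_0$, and then to show that this system is feasible exactly when \eqref{condition_Q_final} holds.

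First I would substitute the pre-quantized control law \eqref{command} and the actuator action \eqref{action} into the plant recursion \eqref{system} to obtain \eqref{x_t+n}, hence the three-case closed form \eqref{estimation} for $\hat x_{t+n}$ (the initial slot $t=0$; a correctly decoded slot, where the governing quantity is $a^n\omega_t+a^{n-1}bv_t$; and an erased slot, where it is $a^nx_t$ and the range must absorb an extra factor $a^n$ on the previous estimate). Bounding $\omega_t$ by $\omega_{\max}$, $x_0$ by $x_{0,\max}$, and $v_t$ by the quantization step $H_t/2^{nR-2}$, and requiring the range produced by the updating rule \eqref{update rule} in each of the four cases of the condition list preceding \eqref{condition_Q} to contain $-\tfrac{a}{b}\hat x_{t+n}$, yields precisely the three inequalities of \eqref{condition_Q}: one pinning $H_0$, one for the expanding branch $s_t=0$, and one for the contracting branch $s_t=1$. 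Since the supports $(-x_{0,\max},x_{0,\max})$ and $(-w_{\max},w_{\max})$ are tight and are approached by admissible realizations, I would argue these inequalities are not merely sufficient but necessary for containment along every disturbance/erasure trajectory.

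Next I would take $H_0$ to be the minimal admissible value $\Xi(\lceil\log_L(a^{n+1}x_{0,\max}/(b\Xi(0)))\rceil)$, so the first inequality of \eqref{condition_Q} holds by construction; the remaining two have the form $(\text{coefficient})\cdot\Xi(0)>\tfrac{a^{n+1}}{b}\omega_{\max}$, a fixed positive constant, so they can be met by taking $\Xi(0)$ large if and only if both coefficients $L-a^n$ and $1-La^n/2^{nR-2}$ are strictly positive, i.e.\ iff some $L$ lies in $(a^n,\,2^{nR-2}/a^n)$, which is \eqref{quan_condition}. That interval is nonempty iff $a^{2n}<2^{nR-2}$, and taking base-$2$ logarithms this reads $2n\log_2 a<nR-2$, i.e.\ $n(R-2\log_2 a)>2$; such a positive integer $n$ exists iff $R>2\log_2 a$, in which case the constraint on $n$ becomes $n>2/(R-2\log_2 a)$ --- exactly \eqref{condition_Q_final}. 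The converse runs the same chain backwards: if \eqref{condition_Q_final} fails then the $L$-interval is empty, so one of the two coefficients is non-positive for every $L$, the corresponding inequality of \eqref{condition_Q} fails for all $\Xi(0),H_0$, and by the tightness noted above there is an admissible trajectory pushing $\tilde\mu_t$ outside $(-H_t,H_t)$.

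I expect the main obstacle to be the tightness/necessity step rather than the algebra: one must exhibit admissible realizations that approach the worst-case bounds --- a run of consecutive erasures with disturbances of a common sign, followed (for the contracting branch) by a correctly decoded slot with adversarial quantization noise --- so that no updating rule of the form \eqref{update rule} with an empty feasible $L$-interval can keep the pre-quantized signal in range for all time. Everything else is substitution into \eqref{estimation}, bounding by $\omega_{\max}$, and the one-line equivalence $a^{2n}<2^{nR-2}\Leftrightarrow n(R-2\log_2 a)>2$.
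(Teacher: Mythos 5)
Your proposal is correct and follows essentially the same route as the paper: substitute \eqref{command} and \eqref{action} into \eqref{system} to get \eqref{estimation}, reduce the containment requirements to \eqref{condition_Q}, pin $H_0$ at its minimal admissible value, observe feasibility for large $\Xi(0)$ iff some $L$ lies in $\left(a^n,\,2^{nR-2}/a^n\right)$ as in \eqref{quan_condition}, and note that this interval is nonempty iff $n(R-2\log_2 a)>2$, i.e.\ \eqref{condition_Q_final}. Your added tightness argument for necessity (adversarial disturbance/erasure trajectories) is more explicit than anything in the paper, which simply asserts the equivalence after deriving \eqref{quan_condition}, but it does not change the approach.
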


Although the condition introduced by the zooming-quantizer in Proposition~\ref{prop:quantizer} is independent of the error probability, the stability condition still depends on it as discussed in the following section.

\section{Stability and Performance Analysis of the Practical WNCS}
\subsection{Stability Condition and Performance Analysis}
Before we proceed further, we need the following definitions/properties:
\begin{enumerate}
	\item We divide the time into patterns each consisting of $n$ consecutive time slots. Pattern $j$, $j\in \mathbb{Z}_0$, is defined as the $(j+1)$~th consecutive $n$ time slots after the previous correct decoding of a codeword, as illustrated in Fig.~\ref{fig:patterns}. 
	\item The patterns forms an infinite state discrete Markov chain, and the stationary probability of pattern $j$ can be derived~as 
	\begin{equation}
	\Psi_j\triangleq 
	\myprobability{\patt_{{kn}}=j} 
	 = (1-\epsilon)\epsilon^{j}, \forall j\in \mathbb{Z}_0.
	\end{equation}
\end{enumerate}

Since the last state in each patten has the largest variance, to analyze the mean-square stability condition~\eqref{first_condition}, we only need to analyze the down-sampled {random process} $X_{kn}, \forall k \gg 1$.

\begin{figure}
	\centering
\begin{tikzpicture}[scale = 0.7]

\draw [blue] (-4.5,2) rectangle (-2.5,1.5);

\draw [red] (-4.5,1) rectangle (-2.5,0.5);
\draw  [blue] (-2.5,1) rectangle (-0.5,0.5);

\draw [red] (-4.5,0) rectangle (-2.5,-0.5);
\draw [red] (-2.5,0) rectangle (-0.5,-0.5);
\draw  [blue] (-0.5,0) rectangle (1.5,-0.5);

\draw [red] (-4.5,-1) rectangle (-2.5,-1.5);
\draw [red] (-2.5,-1) rectangle (-0.5,-1.5);
\draw [red] (-0.5,-1) rectangle (1.5,-1.5);
\draw  [blue] (1.5,-1) rectangle (3.5,-1.5);
\node at (-3.5,1.75) {Correct};

\node at (2.5,-1.25) {Correct};
\node at (0.5,-1.25) {Incorrect};
\node at (0.5,-0.25) {Correct};
\node at (-1.5,-1.25) {Incorrect};
\node at (-1.5,-0.25) {Incorrect};
\node at (-1.5,0.75) {Correct};
\node at (-3.5,-1.25) {Incorrect};
\node at (-3.5,-0.25) {Incorrect};
\node at (-3.5,0.75) {Incorrect};
\draw  [blue] (-6.5,2) rectangle (-4.5,1.5);
\draw  [blue] (-6.5,1) rectangle (-4.5,0.5);
\draw  [blue] (-6.5,0) rectangle (-4.5,-0.5);
\draw  [blue] (-6.5,-1) rectangle (-4.5,-1.5);
\node at (-5.5,1.75) {Correct};
\node at (-5.5,0.75) {Correct};
\node at (-5.5,-0.25) {Correct};
\node at (-5.5,-1.25) {Correct};
\draw [dashed] (-4.5,2.5) -- (-4.5,-2.5);
\draw [dashed] (-2.5,2.5) -- (-2.5,-2.5);
\draw [dashed] (-0.5,2.5) -- (-0.5,-2.5);
\draw [dashed] (1.5,2.5) -- (1.5,-2.5);
\node at (4,-1.25) {...};
\node at (-7,-1.25) {...};
\node at (-3.5,-2) {Pattern $0$};
\node at (-1.5,-2) {Pattern $1$};
\node at (0.5,-2) {Pattern $2$};
\end{tikzpicture}
\caption{Illustration of the patterns.}
\label{fig:patterns}
\end{figure}
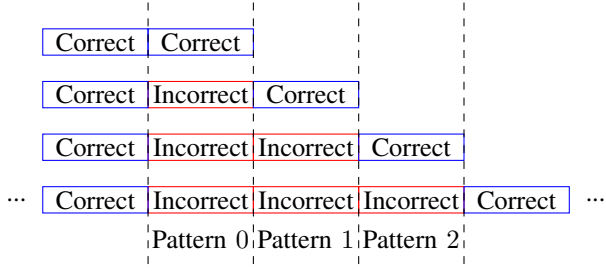

Based on \eqref{x_t+n} and \eqref{estimation}, 
the state $X_{kn}$ given that $H_{kn}=\Xi(i)$ and $\patt_{kn}=j$, can be written~as
\begin{equation}
\begin{aligned}
X_{kn} = &a^{j n-1}b V_{kn-j n} + a^{(j+1)n-1} W_{kn-(j+1)n}\\
&+ a^{(j+1)n-2} W_{kn-(j+1)n +1}
+\cdots + W_{kn-1},
\end{aligned}
\end{equation}
where $V_{kn-(j+1)n}$ is the quantization noise caused by the zooming quantizer with the range $(-\Xi(i),\Xi(i))$, and hence
\begin{equation}
\vert V_{kn-j n} \vert <v_{i,\max} \triangleq \frac{L^i \Xi(0)}{2^{nR-2}}.
\end{equation}
Since $W_{i},\forall i$ is an i.i.d. random process, the conditional expectation of $\mynorm{X_{kn}}$ is given by
\begin{equation}
\begin{aligned}
&\myexpect{}{\mynorm{X_{kn}} \vert H_{kn-j n}=\Xi(i),\patt_{kn}=j,V_{kn-(j+1)n}}\\
&= \left(V_{kn-j n}\right)^2 \tr{\mynorm{a^{j n-1}b}}
+\sigma^2_W 
\sum_{m=0}^{(j+1)n-1}  \tr{\mynorm{a^{m}}} .
\end{aligned}
\end{equation}

Thus, by defining the conditional expectation
\begin{equation}
\myA_{j,i} \triangleq \myexpect{}{\mynorm{X_{kn}} \vert H_{kn-jn}=\Xi(i),\patt_{kn}=j},
\end{equation}
we have
\begin{equation}
\begin{aligned}
\sigma^2_W &
\sum_{m=0}^{(j+1)n-1}  \tr{\mynorm{a^{m}}} 
<\myA_{j,i} \\
&<\left(v_{i,\max}\right)^2
 \tr{\mynorm{a^{j n-1}b}}
+\sigma^2_W 
\sum_{m=0}^{(j+1) n-1}  \tr{\mynorm{a^{m}}}, 
\end{aligned}
\end{equation}
since 
$
0< \myexpect{\!}{\left(V_{kn-(j+1)n}\right)^2} < \left(v_{i,\max}\right)^2.
$

By the law of total expectation, we have 
	\begin{equation}
	\begin{aligned}
	\myA 
	&=\sum_{i=0}^{\infty} \sum_{j=0}^{\infty}  \Phi_i \Psi_j  \myA_{j,i}
	= \sum_{i=0}^{\infty} \Phi_i \sum_{j=0}^{\infty}  \Psi_j \myA_{j,i},
	\end{aligned}
	\end{equation}
where $	\Phi_i = \myprobability{H_{kn}=\Xi(i)}$ is defined in \eqref{Phi}, and the first equality is due to the fact that 
\begin{equation}
\myprobability{{H}_{kn-jn}=\Xi(i) \vert \patt_{{kn}}=j}
=
\myprobability{H_{kn-jn}=\Xi(i)}.
\end{equation}
	
After simplification, the average cost can be bounded as
\begin{equation} \label{upper_bound}
\begin{aligned}
\myA 
&<\sum_{i=0}^{\infty} \Phi_i 
{(1-\epsilon)} \times\\
&\left(
{v^2_{i,\max}} b^2  \sum_{j=0}^{\infty} \epsilon^j \left(a^2\right)^{j n-1}
+
\sigma^2_W \sum_{j=0}^{\infty} \epsilon^j \frac{\left(a^2\right)^{(j+1)n}-1}{a^2-1}
\right)
\end{aligned}
\end{equation}
and
\begin{equation} \label{lower_bound}
\begin{aligned}
\myA 
>
{(1-\epsilon)}\left(
\sigma^2_W \sum_{j=0}^{\infty} \epsilon^j \frac{\left(a^2\right)^{(j+1)n}-1}{a^2-1}
\right).
\end{aligned}
\end{equation}

Therefore, a sufficient and a necessary condition to make $\myA$ bounded can be obtained by making the upper and the lower bound of $\myA$ bounded, respectively, which are given by 
\begin{equation} \label{condition_1}
\epsilon < \min \left\lbrace \frac{1}{1+ L^2}, \frac{1}{a^{2n}} \right\rbrace,
\end{equation}
and
\begin{equation} \label{condition_2}
\epsilon <  \frac{1}{a^{2n}}.
\end{equation}
Therefore, by integrating the conditions~\eqref{quan_condition}, \eqref{condition_1} and \eqref{condition_2}, we have the following result.
\begin{theorem}	 \label{theo:2}
	\normalfont	
A sufficient condition and a necessary condition to stabilize the plant with the control policy in Sec.~\ref{sec:control} and the zooming quantizer $\myQQ_{H_t}(\cdot)$ defined in Sec.~\ref{sec:quantizer} are given~by
\begin{equation} \label{suffi}
n > \frac{2}{R-2 \log_2 a},R > 2 \log_2 a, \epsilon < \frac{1}{1+a^{2n}},
\end{equation}
and
\begin{equation}
n > \frac{2}{R-2 \log_2 a},R > 2 \log_2 a, \epsilon < \frac{1}{a^{2n}}.
\end{equation}
\end{theorem}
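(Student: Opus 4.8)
The plan is to obtain Theorem~\ref{theo:2} by combining three facts already in place in the preceding development: Proposition~\ref{prop:quantizer}, which guarantees the zooming quantizer is well defined (pre-quantized signal always in range) and supplies the admissible window $a^n<L<2^{nR-2}/a^n$ of~\eqref{quan_condition}; the upper bound~\eqref{upper_bound} on the average cost $\myA$, whose finiteness gives the sufficient part; and the lower bound~\eqref{lower_bound}, whose finiteness is necessary for mean-square stability. The decomposition $\myA=\sum_i\Phi_i\sum_j\Psi_j\myA_{j,i}$ with the stationary laws $\Phi_i$ of~\eqref{Phi} and $\Psi_j=(1-\epsilon)\epsilon^j$ reduces the whole question to convergence of a pair of geometric-type series.

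\emph{Sufficiency.} First I would assume the quantizer-existence condition~\eqref{condition_Q_final}, so that the window for $L$ is non-empty and~\eqref{x_t+n}--\eqref{estimation}, hence~\eqref{upper_bound}, hold for any $L$ in it. In~\eqref{upper_bound} the disturbance series in $j$ has ratio $\epsilon a^{2n}$ (convergent iff $\epsilon<1/a^{2n}$), while the quantization series, after substituting $v_{i,\max}^2\propto L^{2i}$ and $\Phi_i\propto(\epsilon/(1-\epsilon))^i$, factors into an outer series in $i$ with ratio $\epsilon L^2/(1-\epsilon)$ (convergent iff $\epsilon<1/(1+L^2)$) times the same inner $j$-series. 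Thus~\eqref{upper_bound} is finite exactly under~\eqref{condition_1}, and since $L>a^n$ forces $1/(1+L^2)<1/a^{2n}$, the binding requirement is $\epsilon<1/(1+L^2)$, i.e.\ $L<\sqrt{(1-\epsilon)/\epsilon}$. Now I would exploit that $L$ is a free parameter: for any $\epsilon<1/(1+a^{2n})$ one has $\sqrt{(1-\epsilon)/\epsilon}>a^n$, so there is an $L$ lying simultaneously in $(a^n,2^{nR-2}/a^n)$ and below $\sqrt{(1-\epsilon)/\epsilon}$; with this $L$, $\myA<\infty$, i.e.\ mean-square stability holds, which establishes~\eqref{suffi}. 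Note that $|a|>1$ gives $1/(1+a^{2n})<1/2$, so the law $\Phi_i$ is automatically a valid stationary distribution and needs no separate hypothesis.

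\emph{Necessity.} Here I would argue in two independent steps. Admissibility of the control/quantization scheme of Secs.~\ref{sec:control}--\ref{sec:quantizer} already requires the pre-quantized signal to stay in range, which by Proposition~\ref{prop:quantizer} forces $n>2/(R-2\log_2 a)$ and $R>2\log_2 a$. Separately, the lower bound~\eqref{lower_bound} holds with no extra assumption---it is merely the disturbance variance accumulated over a pattern, averaged against $\Psi_j$---and its right-hand side is a geometric series in $j$ with ratio $\epsilon a^{2n}$, which diverges whenever $\epsilon a^{2n}\ge 1$; hence $\limsup_t\myexpect{}{\mynorm{x_t}}$ is infinite unless $\epsilon<1/a^{2n}$, which is~\eqref{condition_2}. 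Combining the two steps yields the stated necessary condition.

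\emph{Main obstacle.} Most of this is routine summation of geometric series plus quotation of earlier results; the one genuinely non-routine point is the elimination of the auxiliary scaling factor $L$ in the sufficiency direction---showing that $L$ can be pushed close enough to $a^n$ inside the admissible window of~\eqref{quan_condition} that the $L$-dependent threshold $\epsilon<1/(1+L^2)$ can be replaced by the sharp, $L$-free threshold $\epsilon<1/(1+a^{2n})$, and checking that this threshold also subsumes $\epsilon<1/a^{2n}$ and the positive-recurrence requirement $\epsilon<1/2$.
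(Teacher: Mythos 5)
Your proposal is correct and follows essentially the same route as the paper: the theorem is obtained exactly by combining the quantizer-existence condition of Proposition~\ref{prop:quantizer} (equivalently \eqref{quan_condition}) with the geometric-series convergence conditions \eqref{condition_1} and \eqref{condition_2} from the bounds \eqref{upper_bound} and \eqref{lower_bound}. Your explicit elimination of $L$ (choosing $L$ in $\bigl(a^n,\min\{2^{nR-2}/a^n,\sqrt{(1-\epsilon)/\epsilon}\}\bigr)$ so that $\epsilon<1/(1+a^{2n})$ suffices) is precisely the step the paper leaves implicit in the phrase ``integrating the conditions,'' and it is a valid and welcome clarification.
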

From Theorem~\ref{theo:2}, we see that the sufficient and necessary conditions are close to each other especially when $n \gg 1$.
In general, we need a large blocklength, a high data rate and a small error probability to stabilize the system.

Taking \eqref{Yury_e} into Theorem~\ref{theo:2}, the sufficient and necessary stability regions in terms of $(n,R)$ can be obtained. Similar to Proposition~\ref{prop:bound_nair}, we can obtain the following property of the stability region.
\begin{proposition}
	\normalfont	
	Assume the stability region exists based on Theorem~\ref{theo:2}. If $2 \log_2 a \geq C-2 \sqrt{\nu\ln a}$, the two-dimensional stability region is $n$-bounded; otherwise, the region is $n$-unbounded when $R \in \left(2 \log_2 a, C- 2 \sqrt{\nu \ln a} \right)$.	
\end{proposition}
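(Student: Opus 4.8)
The plan is to reuse, essentially verbatim, the argument behind Proposition~\ref{prop:bound_nair}, the only change being that the lower constraint on the rate is now $R>2\log_2 a$ (from the zooming‑quantizer condition of Proposition~\ref{prop:quantizer}, entering Theorem~\ref{theo:2}) instead of the $R>\log_2 a$ implicit in Corollary~\ref{theo:1}. First I would substitute the finite‑blocklength relation \eqref{Yury_e}, $\epsilon\approx Q\!\left(\sqrt{n/\nu}\,(C-R)\right)$, into the two conditions of Theorem~\ref{theo:2}, so that the sufficient (resp.\ necessary) stability region becomes the set of $(n,R)$ with $R>2\log_2 a$, $n>2/(R-2\log_2 a)$, and $Q\!\left(\sqrt{n/\nu}\,(C-R)\right)<1/(1+a^{2n})$ (resp.\ $<1/a^{2n}$). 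For any fixed $R$ the first two constraints hold for all large $n$, and both right–hand sides are $\Theta(a^{-2n})$; hence whether the region is $n$‑bounded or $n$‑unbounded is decided by the growth of
\[
g(n,R)\triangleq Q\!\left(\sqrt{n/\nu}\,(C-R)\right)a^{2n}
\]
as $n\to\infty$ with $R$ fixed.

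Next I would pin down $g(n,R)$ using the standard two‑sided Gaussian tail bound, for $x>0$,
\[
\frac{x}{(1+x^2)\sqrt{2\pi}}\,e^{-x^2/2} < Q(x) < \frac{1}{x\sqrt{2\pi}}\,e^{-x^2/2},
\]
applied with $x=\sqrt{n/\nu}\,(C-R)$ when $R<C$ (so $x^2/2=n(C-R)^2/(2\nu)$) and with $a^{2n}=e^{2n\ln a}$. This yields $g(n,R)=\Theta(n^{-1/2})\exp\!\big(n\,\rho(R)\big)$ with rate $\rho(R)\triangleq 2\ln a-\tfrac{(C-R)^2}{2\nu}$. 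Since $\rho(R)<0\iff(C-R)^2>4\nu\ln a\iff R<C-2\sqrt{\nu\ln a}$ and $\rho(R)>0\iff C-2\sqrt{\nu\ln a}<R<C$, while for $R\ge C$ the argument of $Q$ is non‑positive so $g(n,R)\ge\tfrac12 a^{2n}$ trivially, the critical rate is $R=C-2\sqrt{\nu\ln a}$, exactly as in Proposition~\ref{prop:bound_nair}, now compared against $2\log_2 a$.

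The two directions then follow. For the $n$‑unbounded claim, assume $2\log_2 a<C-2\sqrt{\nu\ln a}$ and pick $R\in\left(2\log_2 a,\,C-2\sqrt{\nu\ln a}\right)$; then $\rho(R)<0$, so $g(n,R)\to0$ and $Q(\sqrt{n/\nu}(C-R))\to0$, whence $Q(\sqrt{n/\nu}(C-R))(1+a^{2n})=Q(\sqrt{n/\nu}(C-R))+g(n,R)\to0<1$, i.e.\ the \emph{sufficient} condition of Theorem~\ref{theo:2} holds for all sufficiently large $n$ (the side constraints also holding for large $n$ since $R-2\log_2 a>0$ is fixed). Hence $(n,R)$ lies in the stability region for arbitrarily large $n$. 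For the $n$‑bounded claim, assume $2\log_2 a\ge C-2\sqrt{\nu\ln a}$ and take any admissible $R$: if $R\le 2\log_2 a$ no $n$ qualifies; if $R>2\log_2 a$ then $R>C-2\sqrt{\nu\ln a}$, so either $R\ge C$ and $g(n,R)\ge\tfrac12 a^{2n}\to\infty$, or $C-2\sqrt{\nu\ln a}<R<C$ and $\rho(R)>0$ so $g(n,R)\to\infty$; either way the \emph{necessary} condition $\epsilon<1/a^{2n}$ fails for all large $n$, so the region is bounded in $n$. If a cutoff uniform in $R$ is wanted, I would note that $R>2\log_2 a$ forces $C-R<C-2\log_2 a\le 2\sqrt{\nu\ln a}$, hence $\rho(R)\ge 2\ln a-(C-2\log_2 a)^2/(2\nu)=:\beta\ge0$ (strict when the hypothesis is strict), so the lower tail bound gives $g(n,R)\ge c\,n^{-1/2}e^{n\beta}$ with $c>0$ independent of $R$, settling $2\log_2 a<R<C$ uniformly, while $R\ge C$ is uniform for free.

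The main obstacle is the boundary behavior near the critical rate: the $\Theta(n^{-1/2})$ prefactor coming from the \emph{tight} Gaussian estimate — not the crude $Q(x)<e^{-x^2/2}$ — is exactly what makes $g(n,R)\to0$ right up to $R=C-2\sqrt{\nu\ln a}$, so one must use the two‑sided tail bound rather than a Chernoff bound, and, for the uniform version, combine it with the case split at $R=C$ and the constraint $R>2\log_2 a$ to keep $\rho(R)$ away from $0$. Everything else is bookkeeping; in particular the post–Proposition~\ref{prop:bound_nair} observation that $C-2\sqrt{\nu\ln a}$ increases continuously from $0$ to $\infty$ in $\snr$ carries over unchanged, so the dichotomy can equivalently be stated as a threshold on $\snr$ determined by $2\log_2 a=C-2\sqrt{\nu\ln a}$.
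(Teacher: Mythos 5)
Your proposal is correct and follows essentially the same route as the paper: the paper proves this proposition simply by remarking it is ``similar to Proposition~\ref{prop:bound_nair}'', i.e., by transplanting the Appendix~B argument — substitute \eqref{Yury_e} into the stability conditions and compare the exponential decay of $Q\!\left(\sqrt{n/\nu}\,(C-R)\right)$ against the growth of $a^{2n}$, with the lower rate threshold now $2\log_2 a$ coming from the zooming-quantizer condition — which is exactly what you do. Your use of two-sided Gaussian tail bounds (in place of the paper's asymptotic series for $Q$) and your uniform-in-$R$ cutoff discussion are minor refinements that make the same argument slightly more rigorous, not a different approach.
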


Fig.~\ref{fig:region_compare} plots the stability region of the ideal WNCS in Corollary~\ref{theo:1} (red lines), where the quantizer utilizes historical information for control signal quantization, and necessary stability region of the practical WNCS with the zooming quantizer (blue circled lines). We see that the stability region of the ideal WNCS is larger than that of the practical WNCS, due to a larger quantization complexity.
\begin{figure}[t]
	\renewcommand{\captionfont}{\small} \renewcommand{\captionlabelfont}{\small}
	\centering
	\includegraphics[scale=0.65]{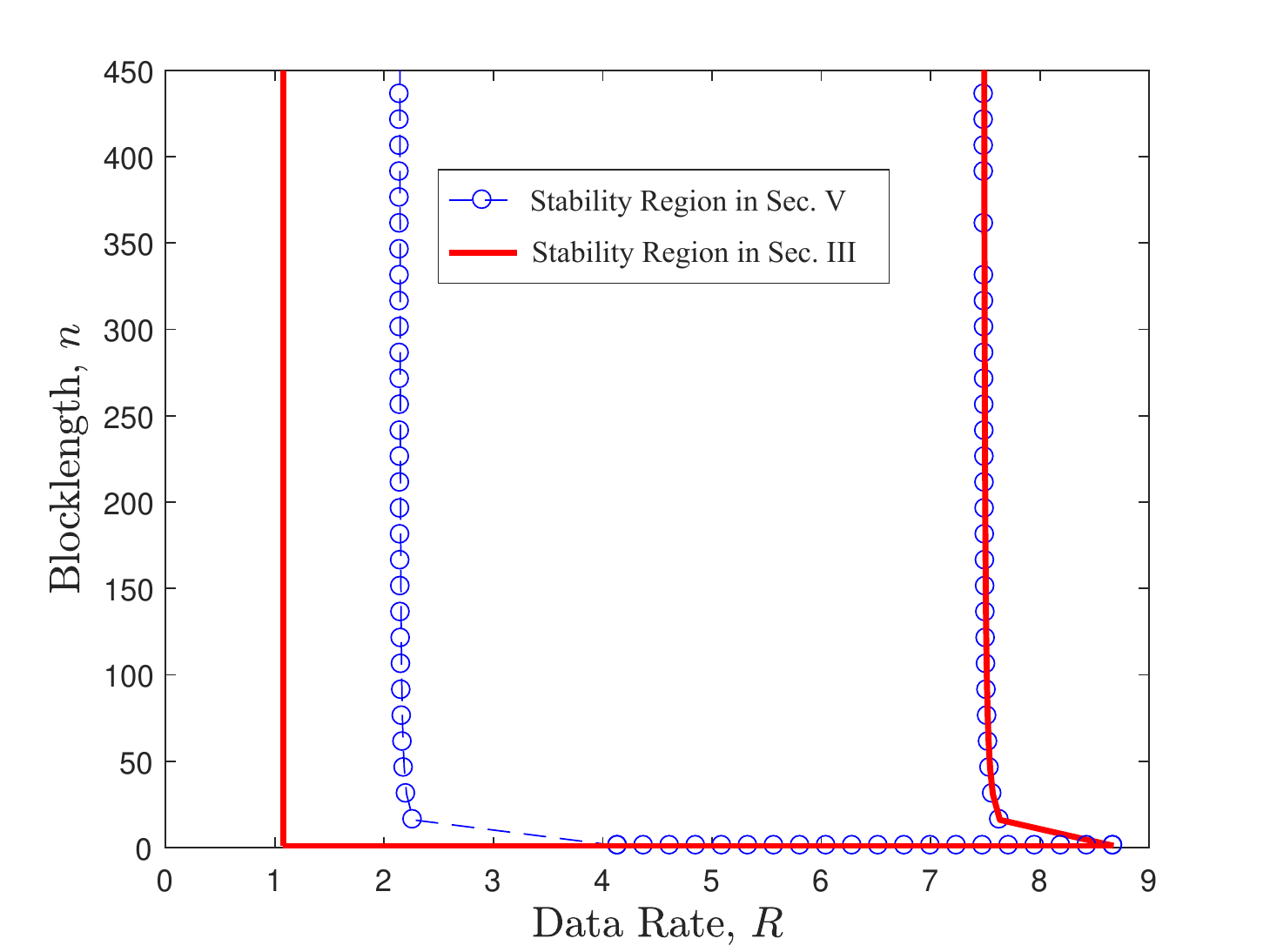}
	\caption{Comparison of stability regions achieved by different control and quantization policies with $\snr=30$~dB and $a=2.1$.}
	\label{fig:region_compare}
\end{figure}

After simplification of the infinite series in \eqref{upper_bound} and \eqref{lower_bound}, we have the following result.
\begin{proposition}\label{prop：bounds}
	\normalfont	
If the sufficient condition~\eqref{suffi} is satisfied, the upper and lower bounds of the average cost are given by
	\begin{equation}\label{upper}
\begin{aligned}
		\myA <& \frac{1}{1- \frac{\epsilon}{1-\epsilon} L^2}  (1-2 \epsilon)	
	\left(\frac{ \Xi(0)}{2^{nR-2}}\right)^2  b^2  \frac{a^{-2}}{1-\epsilon a^{2n}}\\
	&+
	(1-\epsilon)\frac{\sigma^2_W}{a^2-1} \left(\frac{a^{2n}}{1-\epsilon a^{2n}} - \frac{1}{1-\epsilon}\right),
\end{aligned}
	\end{equation}
	and
	\begin{equation}\label{lower}
	\myA > (1-\epsilon)\frac{\sigma^2_W}{a^2-1} \left(\frac{a^{2n}}{1-\epsilon a^{2n}} - \frac{1}{1-\epsilon}\right).
	\end{equation}	
\end{proposition}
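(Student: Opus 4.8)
The plan is to start directly from the two bounds on $\myA$ already established in \eqref{upper_bound} and \eqref{lower_bound}, and simply evaluate the nested geometric series in closed form, being careful about their convergence under the sufficient condition \eqref{suffi}. Before carrying out any summation I would first fix the scaling factor $L$ of the zooming quantizer: \eqref{suffi} forces $a^{2n}<2^{nR-2}$, so by \eqref{quan_condition} the interval $(a^n,2^{nR-2}/a^n)$ is nonempty, and $\epsilon<\frac{1}{1+a^{2n}}$ gives $\sqrt{(1-\epsilon)/\epsilon}>a^n$; hence $L$ can be chosen in $(a^n,\min\{\sqrt{(1-\epsilon)/\epsilon},\,2^{nR-2}/a^n\})$, which guarantees both $\epsilon a^{2n}<1$ and $\frac{\epsilon L^2}{1-\epsilon}<1$.

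For the lower bound \eqref{lower}, the only series is $\sum_{j\geq 0}\epsilon^j\frac{(a^2)^{(j+1)n}-1}{a^2-1}$. Writing it as $\frac{1}{a^2-1}\big(a^{2n}\sum_{j\geq 0}(\epsilon a^{2n})^j-\sum_{j\geq 0}\epsilon^j\big)$ and summing the two convergent geometric series gives $\frac{1}{a^2-1}\big(\frac{a^{2n}}{1-\epsilon a^{2n}}-\frac{1}{1-\epsilon}\big)$; multiplying by $(1-\epsilon)\sigma^2_W$ reproduces \eqref{lower} verbatim.

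For the upper bound \eqref{upper}, I would treat the two summands of \eqref{upper_bound} separately. The disturbance summand equals $(1-\epsilon)\sigma^2_W\big(\sum_{i\geq 0}\Phi_i\big)\sum_{j\geq 0}\epsilon^j\frac{(a^2)^{(j+1)n}-1}{a^2-1}$, and since $\{\Phi_i\}$ from \eqref{Phi} is a probability distribution, $\sum_i\Phi_i=1$, so this term is exactly the lower-bound expression just computed, i.e. the second line of \eqref{upper}. For the quantization summand, I substitute $v^2_{i,\max}=L^{2i}\Xi(0)^2/2^{2(nR-2)}$ and $\Phi_i=\big(\tfrac{\epsilon}{1-\epsilon}\big)^i\tfrac{1-2\epsilon}{1-\epsilon}$: the $j$-sum collapses to $a^{-2}/(1-\epsilon a^{2n})$ (using $(a^2)^{jn-1}=a^{-2}(a^{2n})^j$) and the $i$-sum to $\frac{1-2\epsilon}{1-\epsilon}\frac{\Xi(0)^2}{2^{2(nR-2)}}\frac{1}{1-\frac{\epsilon L^2}{1-\epsilon}}$; the overall factor $(1-\epsilon)$ then cancels one denominator and yields $\frac{1-2\epsilon}{1-\frac{\epsilon}{1-\epsilon}L^2}\big(\frac{\Xi(0)}{2^{nR-2}}\big)^2 b^2\frac{a^{-2}}{1-\epsilon a^{2n}}$, which is the first line of \eqref{upper}. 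Summing the two contributions gives \eqref{upper}.

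The only genuinely delicate point — and the reason I would state the choice of $L$ up front — is the convergence of the $i$-sum, i.e. the requirement $\frac{\epsilon L^2}{1-\epsilon}<1$ (equivalently $\epsilon<\frac1{1+L^2}$, cf. \eqref{condition_1}). This does not follow from $\epsilon<\frac1{1+a^{2n}}$ by itself; it follows only once one exploits the freedom in $L$ granted by \eqref{quan_condition}, as above. Everything else is routine geometric-series algebra with no obstacle; the remaining work is just bookkeeping to line the constants up against \eqref{upper}--\eqref{lower}.
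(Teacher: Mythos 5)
Your proposal is correct and takes essentially the same route as the paper, whose entire proof is the remark that the result follows ``after simplification of the infinite series'' in \eqref{upper_bound} and \eqref{lower_bound}: your geometric-series evaluations (the $j$-sums collapsing to $\frac{a^{-2}}{1-\epsilon a^{2n}}$ and $\frac{1}{a^2-1}\bigl(\frac{a^{2n}}{1-\epsilon a^{2n}}-\frac{1}{1-\epsilon}\bigr)$, and the $i$-sum via $\sum_i\Phi_i L^{2i}$) reproduce \eqref{upper} and \eqref{lower} exactly. Your explicit choice of $L\in\bigl(a^n,\min\{\sqrt{(1-\epsilon)/\epsilon},\,2^{nR-2}/a^n\}\bigr)$ to secure $\epsilon L^2/(1-\epsilon)<1$ is a careful treatment of a convergence requirement the paper leaves implicit (it appears in \eqref{condition_1} but is not restated in \eqref{suffi}), so it is a sharpening of, not a departure from, the paper's argument.
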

\begin{remark}
	The lower bound \eqref{lower} shows that the average cost grows unbounded as $n$ increases, which explains the counter-intuitive finding in Sec.~\ref{sec:conter} that
although the plant can be stabilized with an arbitrarily large blocklength in some cases, the average cost of the system is very high, which is not desirable.
\end{remark}

\subsection{Numerical Results}
By representing the decoding-error probability $\epsilon$ with the data rate $R$ and the blocklength $n$ in Proposition~\ref{prop：bounds}, i.e., taking \eqref{Yury_e} into \eqref{upper} and \eqref{lower}, we numerically present the upper and lower bounds of the average cost function in terms of $R$ and~$n$.
Unless otherwise stated,
we set $\snr=3$~dB, $a=1.01$, $\Xi(0)=10^5$, $L = 10^2$ and $\sigma^2_W = 10^{-10}$~\cite{LiuJIoT}.

{For the configuration of the data rate $R$, the blocklength (latency) $n$, and the reliability $\epsilon$, we follow the pioneer work of finite blocklength information theory~$[22]$, where $R\in [0,3.5]$, $n\in [1,1000]$, and $\epsilon \in [10^{-6}, 10^{-3}]$. Note that in practice $R$, $n$ and $\epsilon$ are determined by the specified channel encoding and modulation scheme.
	For example, considering BCH$(255, 63)$ code with the BPSK modulation scheme, the input number and the output number of bits of the encoder are $63$ and $255$, respectively, and each symbol contains $2$ bits. The channel blocklength is equal to $255/2\approx 128$, and the data rate is equal to $63/128 \approx 0.5$. The detection error probability can be obtained by simulations. 
}

In Fig.~\ref{fig:n}, we plot the upper and lower bounds of the average cost function with different $n$. We see that the upper bound of the average cost depends on $R$ and decreases first and then increases with $n$, while the lower bound is independent of $R$ and monotonically increases with $n$.
Also, it can be observed that the upper and lower bounds are tight when $n$ is large, e.g., $n>200$, while the gap between them is large when $n$ is small.
We have also obtained the simulation result of the average cost with $R=0.19$ based on \eqref{cost}, where $t = 10^8$ and the expectation is obtained by Monte Carlo simulation with $10^4$ trials.
It is interesting to see that the analytical upper bound is close to the simulation result. Thus, in the following, we only evaluate and illustrate the upper bound of the average cost.

In Fig~\ref{fig:R}, we plot the upper bound of the average cost with different $R$. We see that the average cost is not sensitive to the data rate in the range of $(0.25,1)$, and it grows up quickly with a decreasing or increasing  $R$ outside of the range.
In Fig.~\ref{fig:performance}, we give a contour plot of the upper bound  of the average cost in terms of $R$ and $n$.
It can be observed that the average cost function is flat when $R\in(0.25,1)$ and $n \in (100,250)$.
Using \eqref{upper}, we can numerically minimize the average cost by optimizing the communication parameters, where the optimal parameters are $n^\star = 120 $ and $R^\star = 0.7$.

\begin{figure}[t]
	\renewcommand{\captionfont}{\small} \renewcommand{\captionlabelfont}{\small}
	\centering
	\vspace{0.4cm}
	\includegraphics[scale=0.64]{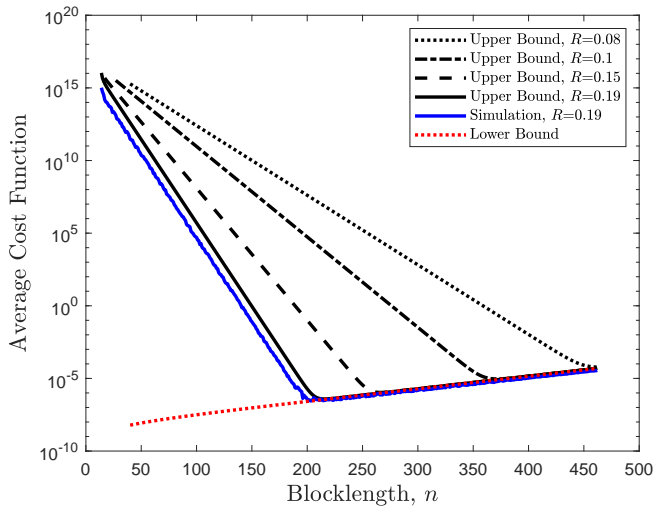}
	\caption{Average cost function versus blocklength $n$.}
	\label{fig:n}
\end{figure}

\begin{figure}[t]
	\renewcommand{\captionfont}{\small} \renewcommand{\captionlabelfont}{\small}
	\centering
	\includegraphics[scale=0.66]{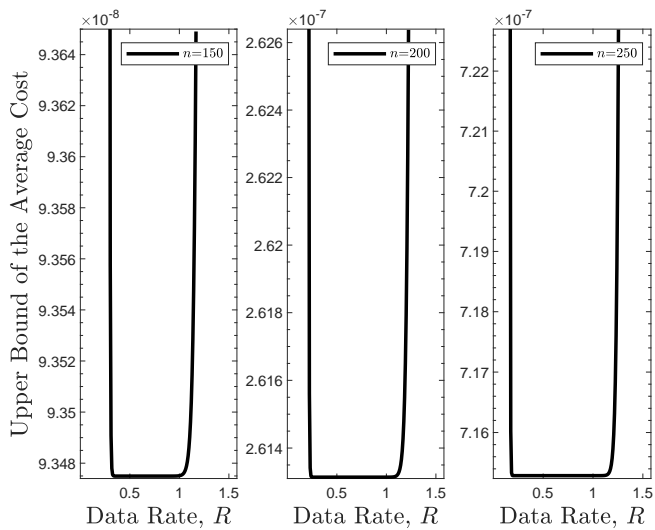}
	\caption{Average cost function versus data rate $R$.}
	\label{fig:R}
\end{figure}

\begin{figure}[t]
	\renewcommand{\captionfont}{\small} \renewcommand{\captionlabelfont}{\small}
	\centering
	\includegraphics[scale=0.68]{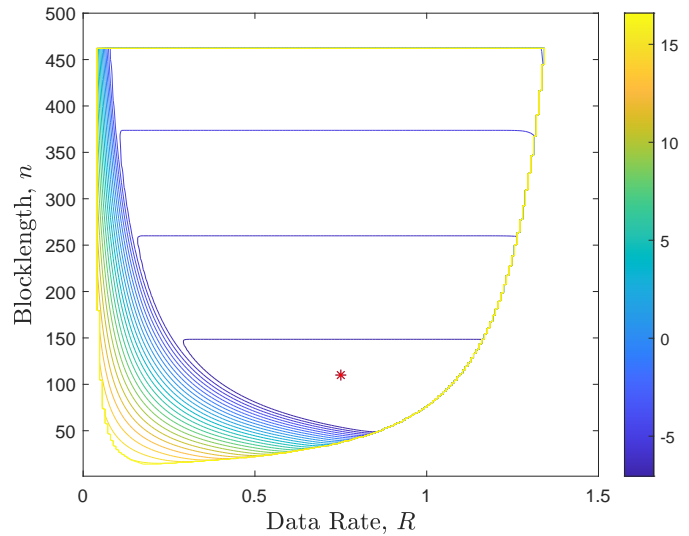}
	\caption{A contour plot of the upperbound of the average cost (in a base-$10$  logarithmic scale), $\myA$, versus the blocklength $n$ and the data rate $R$.}
	\label{fig:performance}
\end{figure}

\section{Conclusions}
\noindent
In this paper, we have investigated a WNCS with a detailed model of the wireless communication system, which jointly considers the inter-related communication parameters.
We have derived the stability region of the WNCS in terms of the communication parameters, and have obtained a necessary and sufficient condition under which the stability region is $n$-bounded.
This result shows an important counter-intuitive finding that the plant can be stabilized with an arbitrarily large blocklength (i.e., a large latency) as long as the SNR of the wireless channel is greater than a certain value determined by the control-system parameters.
We have also analyzed the average cost function of the WNCS and derived tight upper and lower bounds on the average cost function in terms of the communication parameters.   
The performance analysis explains the above mentioned counter-intuitive finding: even though a plant can be stabilized with an arbitrarily large $n$, the average cost of the system is arbitrarily large and is not desirable.
Our analytical results also enable near-optimal design of the communication parameters to minimize the average cost within the stability region.
Of interest for future work is the vector plant systems and wireless control through fading channels~\cite{liu2020remote}.
{Furthermore, we will consider the energy consumption of the controller for transmission and that of the actuator for control. Thus, we will have a multi-object optimization problem with three object functions: the average cost of the plant, the energy cost of the controller and the energy cost of the actuator. 
This is can be viewed as a Pareto-optimal problem. 
When a parameter is incorporated in a multi-component objective function, the search-space is inevitably expanded, which may result in potentially NP-hard problems. 
To achieve the Pareto optimum, we will consider both traditional mathematical algorithms and machine learning algorithms to find all Pareto-optimal operating points, where none of the above-mentioned conflicting objects can be improved without degrading at least one of others~\cite{Pareto}.}


\section*{Appendix A: Proof of Lemma~\ref{lemma:1}}
\noindent
We note that proof follows the same steps as in \cite{Nair09} and we only present a sketch of proof.
The proof of the necessity is based on the maximum entropy theorem \cite{BookInfo}.
The proof of the sufficiency is based on a constructive scheme with the following elements:
1) Both the controller and the actuator adopt an adaptive quantizer, which requires large computation and storage resources at both the controller and actuator.
2) The actuator and controller share a state estimator of the plant that estimates the state as $\hat{x}_t$, $\forall t \in \mathbb{Z}_0$, based on the past correctly decoded codewords.
3) The actuator generates the control action as ${u}_t = \kappa \hat{x}_t$, where $\kappa$ is a constant such that $\vert a  + \kappa b \vert <1$.
4) The controller maps the index number of the quantized $(x_t-\hat{x}_t)$ into an $(n,R,\epsilon)$ codeword and transmits it to the actuator.


We also would like to clarify that in \cite{Nair09}, it is assumed that the transmitter knows the available error-free data rate of the channel before each transmission, while in our system, the codeword transmission is not error-free.
Although \cite{Nair09} and our scenario are not exactly the same at the first glance, the analysis is the same. 
Particularly, in \cite{Nair09}, the transmitter sends no information when its available data rate is $0$.
In our scenario, the transmitter always sends information with rate $R$.
If the detection is successful, it is an error-free transmission with rate $R$.
If the detection is failed and no useful information is transmitted, it is equivalent to the scenario with previously known zero transmission data rate. Thus, in our system, with or without unit delay of the available data-rate information are equivalent in this sense.
Therefore, Lemma~1 can be obtained from Theorem 4.1 of \cite{Nair09} by considering two possible data rates, i.e., $R$ and $0$, with the probability of $(1-\epsilon)$ and $\epsilon$, respectively.

\section*{Appendix B: Proof of Proposition~\ref{prop:bound_nair}}
\noindent
Let $R\in (0,C)$. 
Using the asymptotic approximation of the Q function that
\begin{equation}
Q\left(x\right) \approx
\frac{1}{2}
\frac{e^{-x^2/2}}{x\sqrt{\pi/2}}
\sum_{i=0}^{\infty} (-1)^i \frac{(2i-1)!!}{x^{2i}},
\end{equation}
we have 
\begin{equation}
\begin{aligned}
\lim\limits_{n \rightarrow \infty} 
Q\left(\sqrt{\frac{n}{\nu}}(C-R)\right)
&\approx
\frac{1}{2} \frac{\exp\left(-\frac{1}{2} 
	{\frac{n}{\nu}}(C-R)^2
	\right)}{\sqrt{\frac{n}{\nu}}(C-R) \sqrt{\pi/2}}\\
&= \frac{\left(\exp\left(- 
	{\frac{1}{2\nu}}(C-R)^2
	\right)\right)^n}{\sqrt{ \frac{ 2\pi }{\nu}}(C-R) \sqrt{ n }}.
\end{aligned}
\end{equation}
Therefore, the first term in Corollary~\ref{theo:1}, $Q\left(\sqrt{\frac{n}{\nu}}(C-R)\right) a^{2n}$, is bounded (and goes to zero) when $n \rightarrow \infty$, iff $a^2 \leq \exp\left(
{\frac{1}{2\nu}}(C-R)^2
\right)$, i.e., 
\begin{equation}\label{proof1}
R \leq C- 2\sqrt{\nu \ln a}.
\end{equation}

Also, since $Q\left(\sqrt{\frac{n}{\nu}}(C-R)\right) \rightarrow 0$ when $n \rightarrow \infty$,
the second term in Corollary~\ref{theo:1}, i.e., $\left(1 - Q\left(\sqrt{\frac{n}{\nu}}(C-R)\right)\right) \frac{a^{2n}}{2^{2n R}}$, is less than one when $n \rightarrow \infty$, iff $a \leq 2^R$. The term goes to zero when $n \rightarrow \infty$, iff $a < 2^R$, i.e., 
\begin{equation} \label{proof2}
R >  \log_2 a.
\end{equation}
It is straightforward that when the above equality holds the plant cannot be stabilized based on Corollary~\ref{theo:1}.

Therefore, taking \eqref{proof1} and \eqref{proof2} into Corollary~\ref{theo:1} yields Proposition~\ref{prop:bound_nair}.

\balance

\ifCLASSOPTIONcaptionsoff
\fi
%

\end{document}